\newcommand{\deleted}[1]{}
\newtheorem{observation}{Observation}
\newcommand*{\level}{\ell}
\newcommand{\Plane}{\mathbb{R}^2}
\newcommand{\HVD}[1]{\ensuremath{\textup{\textsf{HVD}}(#1)}}
\newcommand{\FVD}[1]{\ensuremath{\textup{\textsf{FVD}}(#1)}}
\newcommand{\hreg}[2]{\ensuremath{\textup{\textsf{hreg}}_{#1}(#2)}}
\newcommand{\freg}[2]{\ensuremath{\textup{\textsf{freg}}_{#1}(#2)}}
\newcommand{\fskel}[1]{\ensuremath{{\mathcal{T}}(#1)}}
\newcommand{\cl}[1]{\ensuremath{\overline{#1}}}
 \let\Ranges\ranges
 \let\conf\confl
\newcommand{\df}[1]{\ensuremath{\textup{\textsf{d}}_\textup{\textsf{f}}}(#1)}
\newcommand{\bh}[1]{\ensuremath{\textup{\textsf{b}}_\textup{\textsf{h}}}(#1)}
\newcommand{\act}[2]{\mathcal{T}_a(#2,#1)} 
\newcommand{\bd}{\partial}
\newcommand{\f}{\ensuremath{f}}
\newcommand{\vlist}[1]{\ensuremath{V(#1)}}
\newcommand*{\nclus}{k}
\newcommand{\AbsFamily}{F} \let\absFamily\AbsFamily
\newcommand{\curClus}{C}
\newcommand{\InpSet}{F}
\newcommand{\aClus}{C}
\newcommand{\pointFromClus}{c}
\newcommand{\aPoint}{t}
\newcommand*{\Hist}[1]{\mathcal{H}(#1)}
\newcommand{\M}{m}
\newcommand{\etal}{{~et~al}.\xspace}
\newcommand{\evanthia}[2][says]{
\@ifundefined{showmessages}{\relax}
{** \textsc{evanthia #1:} \textcolor{red}{\textsl{#2}} **}}
\newcommand{\elena}[2][says]{
\@ifundefined{showmessages}{\relax}{
** \textsc{elena #1:} \textcolor{magenta}{\textsl{#2}} **}
}
\renewcommand*{\@fnsymbol}[1]{\ensuremath{\ifcase#1\or *\or \dagger\or \ddagger\or
    \mathsection\or \mathparagraph\or \|\or **\or \dagger\dagger
    \or \ddagger\ddagger \else\@ctrerr\fi}}
\begin{document}
\sloppy
\mainmatter

\title{Randomized Incremental Construction for the Hausdorff Voronoi
  Diagram revisited and extended\thanks{
Research 
supported in part by
    the Swiss National Science Foundation, projects
    SNF 20GG21-134355 (ESF EUROCORES EuroGIGA/VORONOI) and  SNF 200021E-154387.
E. A. was also supported partially by F.R.S.-FNRS and SNF grant
P2TIP2-168563 under the SNF Early PostDoc Mobility program.}\thanks{This is a pre-print of an article published in Journal of Combinatorial Optimization. 
The final authenticated version is available online at: https://doi.org/10.1007/s10878-018-0347-x}}
\titlerunning{RIC for HVD revisited and extended}

\author{Elena Arseneva\inst{1}\thanks{Research performed mainly 
    while at the Universit\`a della Svizzera
  italiana (USI).} 
\and
        Evanthia Papadopoulou\inst{2}}

\institute{
St. Petersburg State University (SPbU), Russia,
{\tt ea.arseneva@gmail.com}
\and
Faculty of Informatics,  Universit\`a della Svizzera
  italiana (USI), Lugano, Switzerland,
 {\tt evanthia.papadopoulou@usi.ch}}

\maketitle

\begin{abstract}
The Hausdorff Voronoi diagram of clusters of points in the
plane is a generalization of Voronoi diagrams based on the
Hausdorff distance function. 
Its combinatorial complexity is $O(n+\M)$, where $n$ is the total 
number of points and $\M$
is the number of \emph{crossings} between 
the input clusters ($\M=O(n^2)$);
the number of clusters is $k$.
We present efficient  
algorithms to construct  this diagram following 
the randomized incremental construction (RIC) framework 
[Clarkson\etal~89,~93].
Our algorithm for \emph{non-crossing} clusters ($\M=0$) 
runs  in
expected $O(n\log{n} + k\log n \log k)$ time and deterministic
 $O(n)$ space.
The algorithm for arbitrary clusters runs in 
expected $O((\M+n\log{k})\log{n})$ time and  $O(\M +n\log{k})$ space.
The two algorithms can be combined in a crossing-oblivious
scheme within the same bounds. 
We show how to apply the RIC framework to handle non-standard characteristics of generalized
Voronoi diagrams, including sites (and bisectors) of non-constant complexity, 
sites that are not enclosed in their Voronoi regions, empty Voronoi
regions, and finally, disconnected bisectors and disconnected Voronoi regions. 
The Hausdorff Voronoi  diagram finds direct applications in VLSI~CAD.
\end{abstract}

\section{Introduction} 
\label{sec:intro}
The Voronoi diagram is a powerful geometric partitioning structure
that finds diverse applications in science and engineering~\cite{Aurenbook}.
In this paper we consider the 
\emph{Hausdorff Voronoi diagram} of \emph{clusters of
points} in the plane, a generalization of Voronoi diagrams based on the
Hausdorff distance function, which 
has applications 
in  predicting (and evaluating) faults in VLSI layouts 
and other geometric networks embedded in the plane.

Given a family $\AbsFamily$ of $k$ clusters of points in the plane,
where the total number of points is $n$ ($n=|\cup\AbsFamily|$),  the
\emph{Hausdorff Voronoi diagram} of $\AbsFamily$  is
a  subdivision of the plane into maximal regions such that 
all points within one region have the same \emph{nearest cluster} (see Figure~\ref{fig:hvd}a).
The distance between a point $t\in \mathbb{R}^2$ and a cluster $P\in
\AbsFamily$ is measured by 
their Hausdorff distance, 
which equals the \emph{farthest distance} between $t$ and $P$, $\df{t, P} = \max_{p \in
  P}d(t,p)$ 
and $d(\cdot,\cdot)$ denotes
the Euclidean distance\footnote{Other metrics, such as the
$L_p$ metric,  are possible.} between two points in the plane.
No two clusters in $\AbsFamily$  share a common point.

Informally, the Hausdorff Voronoi diagram is a \emph{min-max} type of
diagram.
The opposite \emph{max-min} type,  
where distance is minimum
and the diagram is farthest,
 is also of interest,
see e.g.,~\cite{fpvd2011cg,fcvd,hks1993dcg}.
Recently both  types of  diagrams  
have been combined to determine 
\emph{stabbing circles}  for sets of line segments in the plane~\cite{CKPSS17}.
We remark that the Hausdorff diagram 
is different in nature from the 
\emph{clustering
  induced Voronoi diagram} by Chen\etal~\cite{chen2013civd}, where sites can be all subsets
of points 
and the \emph{influence
  function} reflects a collective effect of all points in a
site.

The Hausdorff Voronoi diagram finds direct applications in Very Large Scale Integration
(VLSI) circuit design. It can be used to model the location of defects
falling over parts of a network that have been embedded in the plane,
destroying its connectivity.
It has been used extensively by the semiconductor industry
to  estimate the \emph{critical area} of a 
VLSI layout for 
various types of \emph {open faults}, see e.g.,
\cite{P11,CAA}. Critical area is a measure reflecting the
sensitivity of a design to random manufacturing defects.
The diagram can find
applications in geometric networks embedded in the plane, such as transportation networks,
where critical area may need to be extracted for the purpose of
flow control and disaster avoidance.

\paragraph{Previous work.}
The Hausdorff Voronoi diagram was first considered by Edelsbrunner~{et~al.}~\cite{EGS1989}
under the name \emph{cluster Voronoi diagram}.
The authors showed that its  combinatorial
complexity is $O(n^2 \alpha(n))$,  and 
gave a divide and conquer construction algorithm of the same time
complexity,
where $\alpha(n)$ is the inverse Ackermann function.
These bounds were later improved to $O(n^2)$ by Papadopoulou and
Lee~\cite{PL04}.
When the convex hulls of the clusters are disjoint~\cite{EGS1989} or
\emph{non-crossing} (see Definition~\ref{def:noncrossing})~\cite{PL04},
the combinatorial complexity of the  
diagram is $O(n)$.
The $O(n^2)$-time algorithm of Edelsbrunner\etal
is optimal in the worst case. It
exploits the equivalence of the Hausdorff diagram
to the upper envelope of a family of $k$ lower envelopes (one for each
cluster) in an arrangement of
planes in $\mathbb{R}^3$. 
However, it remains quadratic even if the 
diagram has complexity $O(n)$. To continue our description, we need the following.

\begin{definition}
\label{def:noncrossing}
Two clusters $P$ and $Q$ are called \emph{non-crossing},  
if the convex hull of $P\cup Q$ admits at most two supporting line segments with one endpoint in $P$
and one endpoint in $Q$. 
If  the convex hull of $P\cup Q$ admits more than two such supporting segments, then $P$
and $Q$ are called 
\emph{crossing} (see  Figure~\ref{fig:hvd}b).
\end{definition}

The combinatorial complexity of the Hausdorff Voronoi diagram
is $O(n+m)$, where $m$ is the number of \emph{crossings} between pairs
of crossing clusters (see
Definition~\ref{def:crossing}), and this is tight~\cite{ep2004algorithmica}. 
The number of crossings $m$ is upper-bounded by
the number of
supporting segments between pairs of crossing clusters.
In the worst case, $m$ is $\Theta(n^2)$.
Computing the Hausdorff Voronoi diagram in subquadratic time 
when $m$ is $O(n)$ (even if $m=0$)
has not been an easy task.
For  non-crossing clusters ($m=0$), the Hausdorff Voronoi diagram 
is an instance of \emph{abstract Voronoi diagrams}~\cite{klein-habilitation}. 
But a bisector can have complexity
$\Theta(n)$, thus, if we directly apply the randomized incremental 
construction for abstract Voronoi diagrams~\cite{klein-avd}
we get an $O(n^2\log n)$-time algorithm, and this 
is not easy to overcome (see~\cite{CKLP-alg}).
When clusters are crossing, their bisectors
are disconnected curves~\cite{PL04}, and thus, they 
do not satisfy the basic axioms of abstract Voronoi diagrams.

For non-crossing clusters, Dehne~et~al.~\cite{Dehne-coarse} gave the first
subquadratic-time algorithm to compute the Hausdorff diagram, in time
$O(n\log^5 n)$ and space $O(n\log^2 n)$.\footnote{The time complexity claimed in Dehne et al. is $O(n\log^4{n})$, 
however, in reality the described algorithm requires
$O(n\log^5{n})$ time~\cite{anil-communication}.}
Recently, Cheilaris\etal presented a randomized incremental
construction for this problem, which is 
based on \emph{point location} in a 
\emph{hierarchical dynamic
data structure}~\cite{CKLP-alg}.
The expected running time of this algorithm is 
$O(n\log{n}\log{\nclus})$ and the expected space complexity is
$O(n)$~\cite{CKLP-alg}.
However, this approach 
does not easily generalize to crossing clusters.

\begin{figure}
 \centering
\includegraphics{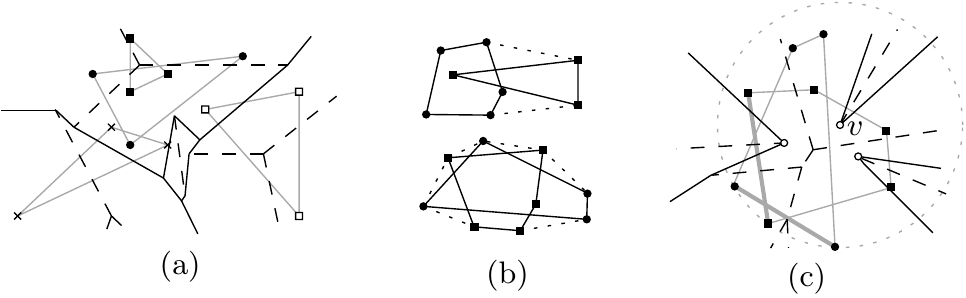}
\caption{(a) The Hausdorff Voronoi diagram of a family of four clusters, where each cluster contains three points. 
(b) A pair of clusters. Above: clusters are non-crossing;  below: 
clusters  are crossing. (c) 
Two crossing clusters $P$ and $Q$ (filled disks and squares, resp.), and 
their Hausdorff Voronoi diagram (black lines). The region of $P$ is disconnected into three faces.  
A crossing mixed vertex $v$, the circle passing through the three points that induce $v$ (dotted lines), and two diagonals
 of $P$ and $Q$ related to $v$ (bold, grey). 
}
\label{fig:hvd}
\end{figure}

\paragraph{Our Contribution.}
In this paper  we revisit  the randomized incremental construction
 for the
Hausdorff diagram and obtain three new results, which complete our
investigation on randomized incremental construction algorithms for
this diagram. The obtained results are especially relevant to 
the setting driven by our application, where
crossings may be present 
whose number is expected to be small
(typically, $\M=O(n)$).
We follow the randomized
incremental construction (RIC) framework introduced by
\mbox{Clarkson\etal~\cite{Clarkson_rand_sampling_2,CMS93}}.
We show how to
efficiently apply this framework to construct  a generalized Voronoi
diagram in the presence of several non-standard features:
(1) bisectors between pairs of sites can each have complexity  $\Theta(n)$; (2) sites need not
be  enclosed in their Voronoi regions; (3) Voronoi regions can be
empty;  and (4) bisector curves may  be disconnected. 
Note that a direct application of the framework would yield an $O(n^2\log n)$ 
(or $O(kn\log n)$) -time algorithm, even for a diagram of complexity $O(n)$.

First, we consider non-crossing clusters, for which the complexity of the
diagram is $O(n)$.
Our algorithm runs  in expected 
$O(n\log{n}+k\log{n\log{k}})$ time  
and deterministic $O(n)$ space. 
In comparison to our previous algorithm~\cite{CKLP-alg},
the construction is considerably simpler and it 
slightly improves its  
time complexity. 
We give the construction for both a \emph{conflict} and a
\emph{history graph}, where the latter 
is an \emph{on-line} variation of the algorithm (see Section~\ref{subsec:ncr-hist}). 

Then, we  consider arbitrary clusters of points. Allowing clusters to cross adds an entire new challenge to the
construction algorithm: 
a bisector  between two clusters consists of multiple connected components,  
thus, one Voronoi region may disconnect in several faces. 
We show how to overcome this challenge 
on a conflict graph and
derive an algorithm  whose expected time and space requirements 
are respectively  
$O(\M \log{n}+n\log{k}\log{n})$ and  $O(\M +n\log{k})$.
To the best of our knowledge, this is the first time the RIC
framework is applied to the construction of 
a Voronoi diagram with disconnected bisectors and disconnected regions.

Finally, we address the question of
deciding which algorithm to use on a given input, without a prior
knowledge on the existence of crossings. 
Deciding whether the input clusters are crossing (or not)
may require quadratic time  by itself,
because the convex hulls of the input clusters may have
quadratic number of intersections, even if the clusters are
non-crossing. In Section~\ref{sec:oblivious}, we show how to only
detect crossings
that are relevant to the
construction of the Hausdorff Voronoi diagram, and thus, provide a crossing-oblivious algorithm
that combines our two previous algorithms, while keeping intact the time
complexity bounds.

\section{Preliminaries}
\label{sec:prelim}

Let $\AbsFamily$ be a family of  $\nclus$
clusters of points in the plane, and let $n$ be the total number of points in $F$; 
no two clusters  
share a point.  
We assume that each cluster 
equals the vertices on its convex hull, as only points on a convex hull
may have non-empty 
regions in the Hausdorff Voronoi  diagram. For simplicity of presentation, we follow a general position
assumption  that  no four points  lie on the same circle.
This general position assumption can be removed similarly
to an ordinary Voronoi diagram of points, e.g., following techniques of symbolic perturbation~\cite{seidel1998nature}. 

The \emph{farthest Voronoi diagram} of a cluster $\aClus$, for brevity $\FVD{\aClus}$, 
is a partitioning of the plane into regions 
such that the 
\emph{farthest Voronoi region} of a point $c\in \aClus$~is 
\[
\freg{\aClus}{\pointFromClus}  = \{\aPoint \mid
\forall c' \in \aClus \setminus\{\pointFromClus\} \colon
d(\aPoint,\pointFromClus) > d(\aPoint,c') \}. \]
Let $\fskel{C}$ denote the graph structure of
$\FVD{\aClus}$, 
$\fskel{C} =  \Plane
\setminus \bigcup_{c \in C} \freg{C}{c}$. 
If  $|C|>1$, $\fskel{C}$ is well known to
be a tree; we  assume that $\fskel{C}$ is rooted at a point at infinity 
on an arbitrary unbounded Voronoi edge.
If $C = \{c\}$, let $\fskel{C} = c$.

The \emph{Hausdorff Voronoi diagram}, for brevity $\HVD{\AbsFamily}$,
is a partitioning of the plane into 
regions 
such that the  \emph{Hausdorff Voronoi region} of 
a cluster $C \in \AbsFamily$ is   
\[\hreg{\AbsFamily}{C}  = \{p \mid \forall C'\in \AbsFamily\setminus\{C\} \colon \df{p,C}
< \df{p,C'} \}.\] 
The region $\hreg{\AbsFamily}{C}$ is further subdivided
into subregions by the $\FVD{C}$.
In particular, the \emph{Hausdorff Voronoi region} of a point $c \in C$ is 
\[\hreg{\AbsFamily}{c}  = \hreg{\AbsFamily}{C} \cap \freg{C}{c}.\]

Figure~\ref{fig:hvd}a illustrates the Hausdorff Voronoi diagram of
a family of four clusters, where 
the convex hulls of the clusters are  
shown in grey lines.
Solid black lines indicate 
the \emph{Hausdorff Voronoi edges} bounding the 
regions of individual clusters, and  the
dashed lines indicate the finer subdivision, which is induced by the farthest Voronoi diagram of \mbox{each
cluster.} Figure~\ref{fig:hvd}c illustrates the diagram for a different faily of clusters following the same drawing conventions. 

The Hausdorff Voronoi edges are portions of \emph{Hausdorff bisectors}
between  pairs of clusters.
The Hausdorff bisector of  two clusters $P,Q \in F$ is  $\bh{P,Q} = \{y \mid  \df{y,P} = \df{y,Q}
\}$; see the solid black lines in Figure~\ref{fig:hvd}c. 
It is a subgraph of $\fskel{P\cup Q}$, and 
it consists of one (if $P,Q$
are non-crossing) or more (if $P,Q$ are crossing) unbounded polygonal
chains~\cite{PL04}. In Figure~\ref{fig:hvd}c the Hausdorff bisector of
the two clusters has three such chains. Each   vertex 
of $\bh{P,Q}$ is the center of a circle passing through two points of
one cluster and one point of another, which entirely  encloses $P$ and
$Q$, see, e.g.,  the dotted circle  
with center at vertex $v$ in Figure~\ref{fig:hvd}c.

\begin{definition}
\label{def:crossing}
A  vertex on the bisector $\bh{C,P}$, induced by two points $c_i, c_j \in C$ and a point $p_l \in P$,  
is called \emph{crossing}, if there is a diagonal $p_lp_r$ of $P$
that crosses the diagonal $c_ic_j$ of $C$, and  all points $c_i,c_j,p_l,p_r$ are on 
the convex hull of $C \cup P$. (See  
vertex $v$ in Figure~\ref{fig:hvd}c.)
The total number of crossing vertices along the bisectors of all pairs of clusters is the \emph{number of crossings}
and this is denoted by $\M$.
\end{definition}

The Hausdorff Voronoi diagram contains three types of
vertices~\cite{ep2004algorithmica} (see Figures~\ref{fig:hvd}a and c):
  (1)~\emph{pure} Voronoi vertices, 
equidistant to three clusters; 
  (2)~\emph{mixed} Voronoi vertices,
  equidistant to three points of two clusters; and 
  (3)~\emph{farthest}   Voronoi vertices,
equidistant to
  three points of one cluster. 
The mixed vertices, which are induced by two points of cluster $C$ (and one point of another cluster), are
  called $C$-\emph{mixed} vertices, and  they are incident to edges of 
  $\FVD{C}$. 
The Hausdorff Voronoi edges are polygonal lines (portions of Hausdorff
bisectors) 
 that connect pure Voronoi vertices.
Mixed Voronoi
 vertices are vertices of Hausdorff bisectors. 
They are characterized as crossing or non-crossing according to Definition~\ref{def:crossing}.
The following property is crucial for our algorithms.  

\begin{lemma}[\cite{ep2004algorithmica}] 
\label{prop:con-comp}   
Each face of a (non-empty) region $\hreg{F}{C}$ 
intersects $\fskel{C}$ in one non-empty
connected component.
The intersection points delimiting this component are $C$-mixed vertices. 
\end{lemma}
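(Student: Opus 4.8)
The plan is to prove both claims---that for each face $f$ of $\hreg{F}{C}$ the set $f\cap\fskel{C}$ is non-empty and connected, and that its delimiters are $C$-mixed vertices---by constructing an explicit retraction of $f$ onto $f\cap\fskel{C}$ that flows along the direction of steepest descent of the farthest distance $\df{\cdot,C}$.

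The engine of the argument is a monotonicity observation. Fix $p\in f$ and let $c\in C$ be the farthest point of $C$ from $p$, so that $p\in\freg{C}{c}$ and $\df{\cdot,C}=d(\cdot,c)$ in a neighbourhood of $p$. Moving from $p$ at unit speed straight towards $c$ decreases $\df{\cdot,C}$ at the maximal rate $1$, while for every other cluster $C'$ the function $\df{\cdot,C'}$, being a maximum of Euclidean distances, is $1$-Lipschitz and hence cannot decrease faster than rate $1$. Therefore each margin $\df{\cdot,C'}-\df{\cdot,C}$ is non-decreasing along this segment; since $p\in\hreg{F}{C}$ makes every margin strictly positive at $p$, they stay positive, so the whole segment remains in $\hreg{F}{C}$ and hence, by connectivity, inside the single face $f$. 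As $c\notin\cl{\freg{C}{c}}$, the segment crosses $\bd\freg{C}{c}\subseteq\fskel{C}$ at a first point $\rho(p)$, which therefore lies in $f\cap\fskel{C}$. This already establishes non-emptiness.

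For connectivity I would extend $\rho$ by the identity on $f\cap\fskel{C}$, obtaining a map $\rho\colon f\to f\cap\fskel{C}$ that is surjective (it fixes its image) and continuous: on each open cell $\freg{C}{c}$ the farthest point is constant and the first intersection of the ray towards $c$ with the boundary of the convex cell varies continuously, while across an edge of $\fskel{C}$ both one-sided limits equal the crossing point. Granting continuity, $f\cap\fskel{C}=\rho(f)$ is the continuous image of the connected set $f$ and is thus connected, giving the single-component claim; in fact sliding each $p$ to $\rho(p)$ along its segment realises $\rho$ as a deformation retraction. Finally, a point delimiting $f\cap\fskel{C}$ within $f$ lies on both $\fskel{C}$ and $\bd f$; a point of $\bd f$ lies on some Hausdorff bisector $\bh{C,C'}$ and is thus equidistant in farthest distance to $C$ and to a point of $C'$, while a point of $\fskel{C}$ is equidistant-farthest to two points of $C$. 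Such a point is exactly a mixed vertex induced by two points of $C$, i.e.\ a $C$-mixed vertex.

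I expect the one genuine obstacle to be the careful verification of the topological claims underpinning the retraction: that the flow segment never leaves $\hreg{F}{C}$ before meeting $\fskel{C}$ (which the monotone-margin computation above is designed to guarantee) and that $\rho$ is continuous across the edges and branch points of the tree $\fskel{C}$, where the farthest point of $C$ changes and the one-sided limits must be matched with some care.
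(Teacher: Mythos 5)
The paper does not prove this lemma at all: it is imported as a known fact from \cite{ep2004algorithmica}, so there is no in-paper argument to measure yours against. Judged on its own terms, your retraction proof is sound and self-contained. The rate-$1$ descent of $\df{\cdot,C}$ towards the farthest point $c$, played against the $1$-Lipschitz bound on every $\df{\cdot,C'}$, is exactly the right monotonicity to push any $p\in f$ onto $\fskel{C}$ without leaving $\hreg{F}{C}$, and the pasting of the per-cell exit maps over the finitely many sets $\cl{\freg{C}{c}}\cap f$ does make $\rho$ continuous --- the point worth spelling out is that the direction from a point of the bisector of $c$ and $c'$ towards $c$ can never be tangent to that bisector (tangency would force $d(c,c)=d(c,c')$), so the segment leaves $\cl{\freg{C}{c}}$ immediately and the one-sided limits match the identity on $\fskel{C}$. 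Hence $f\cap\fskel{C}=\rho(f)$ is a non-empty connected set, i.e.\ a single component. The identification of the delimiting points as $C$-mixed vertices is also correct under the paper's general-position assumption, which is what rules out the degenerate alternatives (a farthest vertex of $C$ or a pure vertex sitting on $\bd f$ would force four cocircular points).

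One phrase overclaims: ``the whole segment remains in $\hreg{F}{C}$'' is not justified, and is in general false, for the full segment from $p$ to $c$. The identity $\df{\cdot,C}=d(\cdot,c)$, and with it the rate-$1$ decrease, holds only while the moving point stays in $\cl{\freg{C}{c}}$; past the first crossing of $\bd\freg{C}{c}$ the farthest point of $C$ changes, $\df{\cdot,C}$ may decrease more slowly than some $\df{\cdot,C'}$, and the margins can shrink. You should assert the monotonicity, and hence containment in $\hreg{F}{C}$, only for the subsegment from $p$ to $\rho(p)$ --- which is all the argument needs, since $\rho(p)$ is precisely that first crossing. With that wording tightened, the proof stands.
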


 Unless stated otherwise, we use a refinement of the Hausdorff Voronoi diagram as derived by the 
 \emph{visibility decomposition} of each
 region   $\hreg{\AbsFamily}{p}$~\cite{PL04} (see Figure~\ref{fig:hvd-insertion}a):
 for each vertex $v$ on 
the boundary of $\hreg{\AbsFamily}{p}$ draw
the line segment $p{v}$, as restricted within $\hreg{\AbsFamily}{p}$.
Each face $\f$ within  $\hreg{F}{p}$ is convex. 
In Figure~\ref{fig:hvd-insertion}a, the edges of the visibility
decomposition in $\hreg{F}{p}$ are shown in bold.

\begin{observation}
\label{obs:vb-face-def}
A face $\f$ of $\hreg{F}{p}$, $p \in P$, borders the regions of $O(1)$
(at most 3) other
clusters  
in $F \setminus \{P\}$. 
\end{observation}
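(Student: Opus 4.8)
The plan is to classify the edges of the convex face $\f$ according to what lies on their far side, and then to bound how many of them can separate $\f$ from \emph{distinct} foreign clusters. Recall that $\hreg{F}{p}=\hreg{F}{P}\cap\freg{P}{p}$, so every edge of $\f$ is of exactly one of three kinds: (i) a segment $pv$ of the visibility decomposition, whose other side is another face of the \emph{same} region $\hreg{F}{p}$; (ii) a piece of $\fskel{P}$, whose other side is the region of another \emph{point} of the same cluster $P$; or (iii) a piece of a Hausdorff bisector $\bh{P,Q}$, whose other side is the region of a \emph{foreign} cluster $Q\in F\setminus\{P\}$. Only edges of type (iii) contribute to the count, so it suffices to show that at most three distinct clusters $Q$ arise among the type-(iii) edges of $\partial\f$.

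First I would exploit that all visibility segments emanate from the single point $p$. Since $\f$ is convex and its type-(i) edges lie on lines through $p$, at most two edges of $\f$ can be radial: they are the two angular extremes of $\f$ as seen from $p$. These split $\partial\f$ into a chain facing away from $p$ and a chain facing towards $p$. For the away-facing chain I would argue it reduces to a single Voronoi edge: every vertex $v$ of $\partial\hreg{F}{p}$ receives a segment $pv$, and for a vertex on the away-facing side this segment reaches $v$ through the interior of $\f$ and hence subdivides $\f$; therefore no such vertex lies in the interior of the away-facing chain, which is a single edge and contributes at most one foreign cluster.

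The main obstacle is the chain of $\partial\f$ facing $p$: a segment $pv$ to a vertex $v$ on this chain approaches $v$ from \emph{outside} the region, touching $\f$ only at $v$, so it does \emph{not} subdivide $\f$, and this chain may a priori carry several edges. Here I would rely on the structure of the Hausdorff diagram rather than on convexity alone. Two consecutive type-(iii) edges to different clusters can meet only at a \emph{pure} Voronoi vertex (equidistant to $P$ and two foreign clusters), while Lemma~\ref{prop:con-comp} controls how $\fskel{P}$ meets the region: the type-(ii) edges form a single connected sub-chain delimited by $P$-mixed vertices. Tracing $\partial\f$ and combining these facts with the convexity of $\f$, I expect to show that, beyond the single away-facing edge, the $p$-facing chain can introduce at most two further distinct clusters, for a total of at most three. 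Carrying out this last case analysis cleanly---in particular handling the endpoints of the $p$-facing chain, the interleaving of type-(ii) and type-(iii) edges, and the tangent vertices where radial segments degenerate---is the delicate part of the argument.
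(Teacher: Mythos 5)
The paper gives no proof of this observation: it is treated as immediate from the structure of the visibility decomposition inherited from \cite{PL04}, and the intended picture is spelled out later in the text accompanying Figure~\ref{fig:branches} --- the boundary of a range consists of a single ``top'' side lying on one pure edge, a ``bottom'' chain lying on $\fskel{P}$, and two radial visibility edges. Your classification of the edges of $\f$ and your argument that the away-facing chain is a single edge (any vertex of $\bd\hreg{F}{p}$ in its relative interior would emit a non-degenerate visibility segment cutting through $\f$) are both sound and match this picture.

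The genuine gap is exactly where you place it: the $p$-facing chain. The missing ingredient is the visibility (weak star-shapedness) property of Hausdorff regions, which is also what makes the decomposition well defined: if $t\in\hreg{F}{p}$ and $s$ lies on the segment $pt$ with $s\in\freg{P}{p}$, then $\df{s,P}=d(s,p)=d(t,p)-d(s,t)<\df{t,Q}-d(s,t)\le\df{s,Q}$ for every $Q$, so $s\in\hreg{F}{p}$. Consequently the $p$-facing portion of $\bd\hreg{F}{p}$ lies \emph{entirely} on $\fskel{P}$: there are no type-(iii) edges on that chain at all, and it contributes zero foreign clusters rather than the ``at most two'' you hope to extract. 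Your proposed substitute --- Lemma~\ref{prop:con-comp} plus convexity --- does not obviously close this: that lemma controls how $\fskel{C}$ meets a face of the region of $C$, not how bisector edges could interleave with $\fskel{P}$ on the near side of one visibility cell, and without the star-shapedness argument nothing rules out a foreign region taking a bite out of the near chain. Finally, note that your accounting of the constant $3$ is off: it does not arise as $1+2$ from the two chains, but as one cluster across the unique top edge plus up to two further clusters at the pure Voronoi vertices forming that edge's endpoints --- these must be counted because they are among the clusters \emph{defining} the range in the sense required by the RIC framework.
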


Given a cluster $P$ and its farthest  Voronoi diagram $\FVD{P}$, our
algorithms often need to answer the following query, 
termed the \emph{segment query}:

\begin{definition}[Segment Query~\cite{CKLP-alg}]\label{def:segm-query}
Given two clusters $P, C \in F$,  $\FVD{P}$, and a line segment $uv
\subset \fskel{C}$ such that $\df{u,C} < \df{u,P}$ and $\df{v,C} >
\df{v,P}$,
find the point $x \in uv$ that is equidistant to  
$C$ and $P$.
\end{definition}

Following~\cite{CKLP-alg}, this 
query can be answered  in $O(\log{|P|})$ time using the
\emph{centroid decomposition} of $\FVD{P}$, see~\cite{CKLP-alg} and references therein. 
The centroid decomposition of $\FVD{P}$ is obtained 
by  recursively breaking $\fskel{P}$ into subtrees of its
\emph{centroid}, 
where the centroid of a tree with $h$ vertices
is a vertex whose removal decomposes the tree into
subtrees with at most $h/2$ vertices each.

\paragraph{Overview of the RIC
  framework~\cite{Clarkson_rand_sampling_2,CMS93}.}
\label{sec:ric}
The framework of the randomized incremental construction (RIC)
to compute a Voronoi diagram,  
inserts sites (also called objects) one by one, in random order, each time recomputing the target diagram. 
The diagram is viewed as a collection of
 \emph{ranges} (also called regions\footnote{To avoid confusion with
   Voronoi regions we use the term \emph{ranges} in this paper.}),  \emph{defined} and without \emph{conflicts} with respect to the set of sites inserted so far. 
 To update the diagram efficiently after each insertion, 
 a \emph{conflict} or a
 \emph{history graph}  is maintained.   
An important prerequisite for using the framework is that \emph{each range must be defined by a constant number  of objects}.

The \emph{conflict graph} is a bipartite graph, 
where one group of nodes corresponds to the ranges of the diagram
  defined by the sites inserted so far,  and the other group 
  corresponds to sites that have not yet been inserted. 
In the conflict graph, a range and a site are connected by an arc if
and only if they are in conflict. 
A RIC algorithm using a conflict graph 
is efficient if the following \emph{update condition} is satisfied at each incremental step: 
(1) Updating the set of ranges defined and without conflicts over the current subset of objects 
requires time proportional to the number of ranges deleted 
or created during this step; and
(2) Updating the conflict graph  
requires time proportional to the number of arcs of the conflict graph
that are added or removed
during this step. 

The expected 
time and space 
complexity for a RIC algorithm is as stated in~\cite[Theorem 5.2.3]{BY1998book}:
Let  $f_0(r)$ be  the expected number of ranges in the target diagram of a random sample of $r$ objects, and let $k$ 
be the number of insertion steps of the algorithm. 
Then:
(1)~ The expected
number of ranges created during the algorithm 
is $O\left(\sum\limits_{r=1}^k(f_0(r)/r)\right)$. 
(2)~If the update condition holds, the expected 
time and space of
the algorithm is 
$O\left(k\sum\limits_{r=1}^k(f_0(r)/r^2)\right)$.

The \emph{history graph} 
is a directed acyclic graph defined as follows. At each step of the algorithm,  
the nodes of the history graph correspond to 
all ranges created by the algorithm so far. 
The nodes with zero out-degree, termed \emph{leaf nodes},
correspond to the ranges present in the current target diagram. 
The nodes that have outgoing edges, termed \emph{intermediate nodes}, correspond to 
the ranges that have been deleted already. 
Intermediate nodes  are connected, by their outgoing edges, to 
the nodes whose insertion  caused the deletion of their ranges. 
The latter nodes are referred to as the \emph{children} of the former nodes. 
The \emph{update condition for a history graph} is the following:
(i) {The out-degree of each node is bounded by a constant; and (ii) the existence of a conflict 
between a given range and a given object can be tested in constant time.} 
The bounds on the time complexity of a randomized incremental algorithm
using a history graph  are the same as those given in item (2) above;
the storage required by the algorithm equals the number of ranges created during the 
algorithm and its expected value is bounded as in item (1) above. 

The rest of the paper is organized as follows. In Section~\ref{sec:non-cr} we give an algorithm to construct $\HVD{F}$ for the input cluster families where all clusters are pairwise non-crossing. In Section~\ref{sec:cr} we give an algorithm for the inputs where clusters may be crossing. 
Both algorithms follow the RIC framework (with the conflict graph) reviewed in the above paragraphs, therefore each of the two sections consists of: (1) the definition of objects, ranges and conflicts, (2) the procedure to insert an object and to update the conflict graph, and (3) the analysis of time and space complexity (note that the analysis differs substantially from simply applying ready theorems about RIC). In addition, Section~\ref{sec:non-cr} contains an adaptation of the algorithm to work with history graph instead of the conflict graph; this adaptation in turn follows the above presentation scheme of three items (1)-(3). Finally, in Section~\ref{sec:oblivious} we show how to combine the algorithm of Section~\ref{sec:non-cr} and the one of Section~\ref{sec:cr} in an efficient algorithm for the case where it is not known whether the input clusters have crossings.

\section{Constructing $\HVD{F}$ for non-crossing clusters}
\label{sec:non-cr} 
Let the clusters in  the input family $F$ be pairwise non-crossing.
Then each Voronoi region is connected 
and the combinatorial complexity of the Hausdorff Voronoi diagram is $O(n)$.

\begin{figure}
\begin{minipage}{0.29\linewidth}
\includegraphics{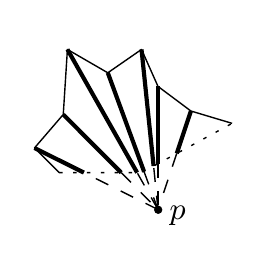}
\\
\centering
(a)
\end{minipage}
\begin{minipage}{0.33\linewidth}
\includegraphics[page=1]
{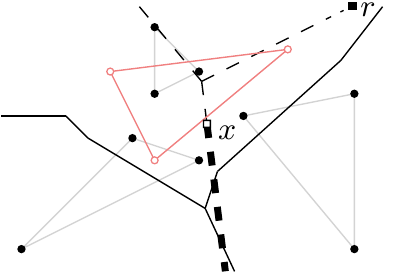}
\\
\centering
(b)
\end{minipage}
\hfill
\begin{minipage}{0.33\linewidth}
\includegraphics[page=2]
{hvd-ins-after}
\\
\centering
(c)
\end{minipage}
\caption{ Left: (a)  Visibility decomposition of the diagram. Right: Insertion of a cluster $C$ (unfilled disks): (b) $\fskel{C}$ (dashed) rooted at $r$, its 
active subtree (bold) rooted at $x$.  
(c) After the 
insertion: $x$ is a $C$-mixed vertex of the HVD.}  
\label{fig:hvd-insertion}
\end{figure}

Let $S \subset F$ such that $\HVD{S}$ has been computed.
Let $C \in F \setminus S$; our goal is to insert $C$ and obtain $\HVD{S \cup
  \{C\}}$. 
We first introduce the following definition for an \emph{active
  subtree} of $\fskel{C}$ and its \emph{root}.
\begin{definition}
\label{def:active}
 Traverse $\fskel{C}$, starting at its root, and 
 let $x$ be the first point we encounter  in the closure of 
$\hreg{S \cup\{C\}}{C}$. 
We refer to the subtree of  $\fskel{C}$ rooted at $x$ as the
\emph{active subtree} of $\fskel{C}$ and denote it by
$\act{S}{C}$. Let $x$ be the root of $\act{S}{C}$.
\end{definition}

Note that the roots of $\fskel{C}$ and $\act{S}{C}$ may coincide, in
which case,  $\act{S}{C}= \fskel{C}$.
In Figure~\ref{fig:hvd-insertion}b, $\act{S}{C}$ is shown in bold dashed
lines superimposed on $\HVD{S}$.
Figure~\ref{fig:hvd-insertion}c illustrates $\HVD{S \cup \{C\}}$.
By Lemma~\ref{prop:con-comp}, and  
since the Voronoi
regions of non-crossing clusters are connected, we have the following property.

\begin{property}
The region $\hreg{S \cup\{C\}}{C}\neq\emptyset$
if and only if $\act{S}{C}\neq\emptyset$.
The root of  $\act{S}{C}$ is a $C$-mixed vertex of
$\HVD{S \cup\{C\}}$ (unless $\act{S}{C}= \fskel{C}$).  
\end{property}

\deleted{
By Lemma~\ref{prop:con-comp}, and  
since the Voronoi
regions of non-crossing clusters are connected, we have the following:
$\hreg{S \cup\{C\}}{C}\neq\emptyset$
if and only if $\act{S}{C}\neq\emptyset$.
Further, the root of  $\act{S}{C}$ is a $C$-mixed vertex of
$\HVD{S \cup\{C\}}$, unless $\act{S}{C}= \fskel{C}$.
}

\subsection{Objects, ranges and conflicts}
\label{sec:ins-ncr} 
We formulate the problem 
of computing $\HVD{\absFamily}$ 
in terms of \emph{objects}, \emph{ranges} and \emph{conflicts}, see Section~\ref{sec:prelim}.
The objects are clusters in $F$.  
The ranges are the refined faces of 
$\HVD{S}$, as refined by the visibility decomposition of $\HVD{S}$
(see Section~\ref{sec:prelim} and Figure~\ref{fig:hvd}c). 
 A range corresponding to a face $\f$, where $\f \subset \hreg{S}{p}$
 and  $p\in P$,  
is said to be \emph{defined by} the cluster $P$ and by the remaining clusters
in $S$ whose Voronoi regions border $\f$.
By Observation~\ref{obs:vb-face-def}, 
there are at most three such clusters;
thus, 
the RIC framework is applicable.
Point $p$ is called the \emph{owner} of range $f$ and  $\hreg{S}{p}$.

\begin{definition}[Conflict for non-crossing clusters]
\label{def:confl} A range $\f$ is \emph{in conflict} with a cluster $C \in F \setminus S$, if 
$\act{S}{C}$ is not empty and its root $x$ lies in $\f$. 
A \emph{conflict} is a triple $(\f,x,C)$;
the \emph{list of conflicts} of range $\f$ is denoted by $\conf{\f}$.
\end{definition}

The following property is implied by  
Lemma~\ref{prop:con-comp}. 
It 
is essential for our algorithm.

\begin{lemma}
\label{cor:one-conf}
Each cluster 
in $F \setminus S$  has at most one conflict with the ranges in
$\HVD{S}$.
If a cluster $\curClus \in F \setminus S$ has no conflicts, 
then $\hreg{S \cup \{C\}}{C} = \emptyset$,  thus, $\hreg{F}{\curClus} = \emptyset$. 
\end{lemma}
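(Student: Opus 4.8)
The plan is to prove both statements of Lemma~\ref{cor:one-conf} as consequences of Lemma~\ref{prop:con-comp} together with the definition of the active subtree (Definition~\ref{def:active}) and of conflict (Definition~\ref{def:confl}). First I would observe that, by Definition~\ref{def:confl}, a conflict of a cluster $C \in F \setminus S$ is witnessed by the root $x$ of $\act{S}{C}$, and this root is a \emph{single} well-defined point: traversing $\fskel{C}$ from its root, $x$ is the \emph{first} point encountered in the closure of $\hreg{S \cup \{C\}}{C}$, hence $x$ is uniquely determined by $S$ and $C$. Since a conflict $(\f, x, C)$ requires $x \in \f$, and the ranges $\f$ partition the plane (being the refined faces of the visibility decomposition of $\HVD{S}$), the point $x$ lies in at most one range $\f$ (in exactly one, if we assign $x$ to the face containing it under general position). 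Therefore $C$ can be in conflict with at most one range, which establishes the first claim.

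For the second claim, I would argue the contrapositive chain through the equivalence already recorded in the unnamed \emph{Property} preceding this lemma. If $C$ has no conflict with any range in $\HVD{S}$, then by Definition~\ref{def:confl} there is no range $\f$ whose interior (or closure) contains the root of $\act{S}{C}$; since the ranges cover the plane, this forces $\act{S}{C} = \emptyset$ (there is no root to place). By the Property, $\act{S}{C} = \emptyset$ is equivalent to $\hreg{S \cup \{C\}}{C} = \emptyset$, which gives the first conclusion. Finally, to upgrade this to $\hreg{F}{C} = \emptyset$ I would invoke monotonicity of Hausdorff Voronoi regions under cluster insertion: since $S \cup \{C\} \subseteq F$, adding the remaining clusters of $F \setminus (S \cup \{C\})$ can only shrink the region of $C$, because the Hausdorff region is defined by the condition $\df{p,C} < \df{p,C'}$ for \emph{all} other clusters $C'$, and enlarging the competing family only adds constraints. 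Hence $\hreg{F}{C} \subseteq \hreg{S \cup \{C\}}{C} = \emptyset$.

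The main obstacle I anticipate is handling the boundary carefully: the root $x$ lies in the \emph{closure} of $\hreg{S \cup \{C\}}{C}$ and could in principle fall on a shared edge between two adjacent faces, or coincide with the root of $\fskel{C}$ (the case $\act{S}{C} = \fskel{C}$, where by the Property $x$ is not a $C$-mixed vertex). I would address this by appealing to the general position assumption (no four points cocircular) to rule out degenerate incidences, and by noting that the uniqueness argument only needs $x$ to be a single point, not its precise face membership; the assignment of $x$ to one specific range is a bookkeeping choice that the insertion procedure fixes consistently. The degenerate case $\act{S}{C} = \fskel{C}$ does not affect the count, since it still yields exactly one root and hence at most one conflicting range. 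The monotonicity step in the second claim is routine once phrased as "more competitors can only shrink the region," but I would state it explicitly since it is the one ingredient not already packaged in Lemma~\ref{prop:con-comp} or the preceding Property.
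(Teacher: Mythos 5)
Your proof is correct and follows exactly the route the paper intends: the paper states this lemma without an explicit proof, presenting it as a direct implication of Lemma~\ref{prop:con-comp} (via the connectedness of regions of non-crossing clusters and the preceding Property), and your argument simply spells out those implications --- uniqueness of the root of $\act{S}{C}$ for the first claim, and the Property plus monotonicity of Hausdorff regions under enlarging the family for the second. Your explicit mention of the monotonicity step and the boundary/degeneracy caveats is a faithful filling-in of details the paper leaves implicit.
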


In the following section, we present the variant of our
algorithm, i.e.,  the procedure to insert a cluster, that is based on a \emph{conflict graph}. 
In Section~\ref{subsec:ncr-hist}, we present  an adaptation
using  a \emph{history graph}.

\subsection{Insertion of a cluster} 
\label{sec:ins-conf-ncr}

Suppose that $\HVD{S}$, $S\subset F$, and its  conflict graph with $F \setminus S$ have been
constructed.
Let $C \in F \setminus S$.
Using the conflict 
of $C$, we can easily compute $\hreg{S
  \cup\{C\}}{C}$. Starting at
the root of $\act{S}{C}$ (which is stored with the conflict), trace the
region boundary 
in an ordinary way
~\cite{PL04}.
The main problem that remains is to identify the conflicts for the new ranges of $\hreg{S
  \cup\{C\}}{C}$ and update the conflict graph.
We give the algorithm 
to perform these tasks in Figure~\ref{fig:algo-upd-ncr} 
as pseudocode and summarize it in the sequel.

To identify new conflicts we use the information 
stored with the ranges that get deleted.
Let $f$ be  a deleted range, and let $p$ be its owner ($\f \subset
 \hreg{S}{p}$). 
For each conflict  $(\f, y, Q)$ of $f$, where $Q$ is a cluster in $F \setminus S$ and $y$ is the root of $\act{S}{Q}$, we  compute
the new root of $\act{S \cup \{C\}}{Q}$, if different from $y$, and identify 
 the new range that 
contains  it.
To compute  the root of $\act{S \cup \{C\}}{Q}$, it is enough to traverse
$\act{S}{Q}$, searching for  an edge $uv$
that contains a point
equidistant from $Q$ and $C$
(see Line 12 in Figure~\ref{fig:algo-upd-ncr}). 
If such an edge $uv$ exists, we
perform a \emph{segment query} 
in $\FVD{C}$ (see Definition~\ref{def:segm-query}), to compute the point
equidistant from $C$ and $Q$ on $uv$ (a $Q$-mixed vertex)
If no such edge exists, then $\act{S \cup \{C\}}{Q}=\emptyset$ and no conflicts for $Q$ should be created.
The remaining algorithm is straightforward (see
Figure~\ref{fig:algo-upd-ncr}). Its correctness is shown in the following lemma. 

\begin{figure}
\centering
\includegraphics[width = 0.85\linewidth]{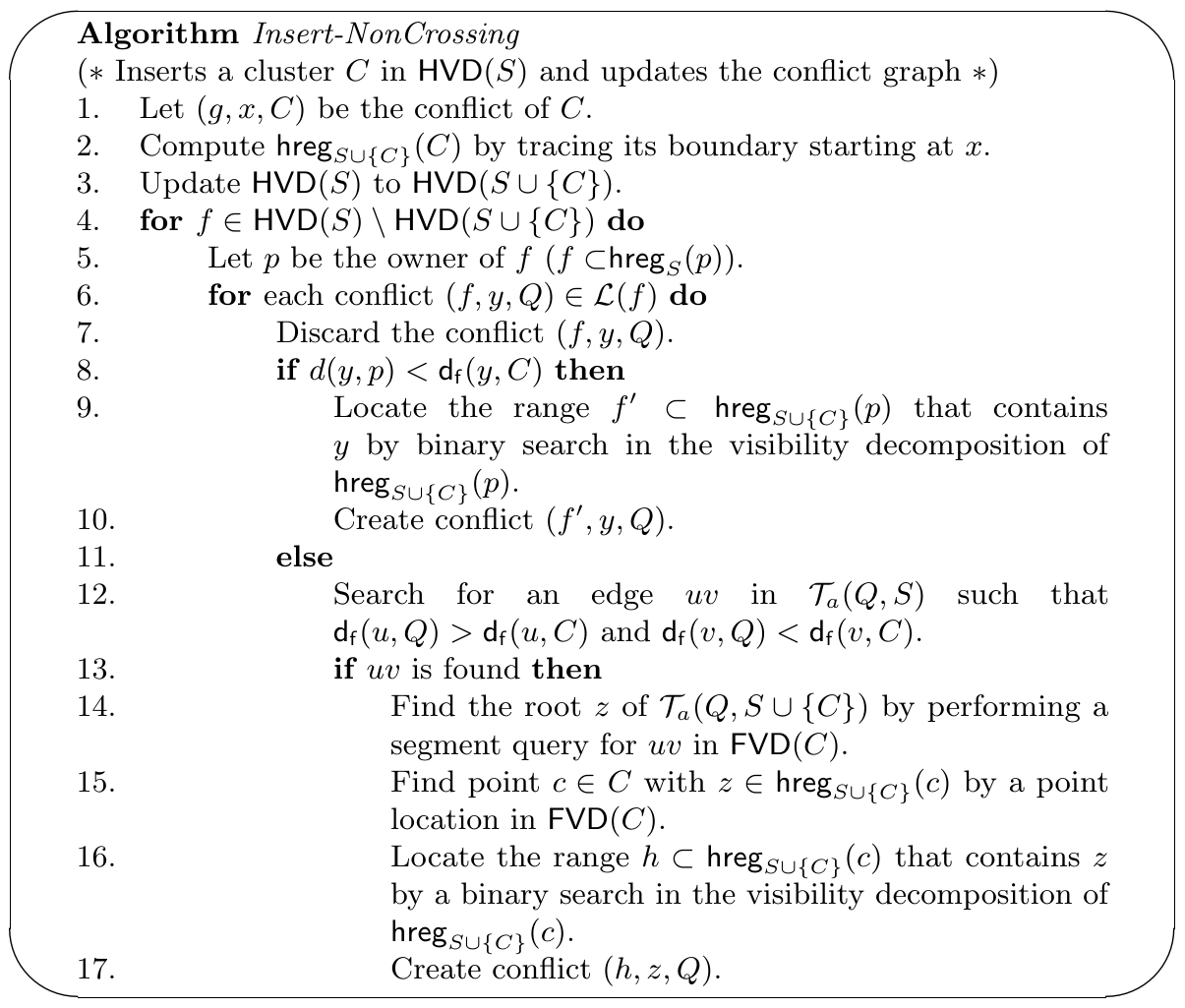}
    \caption{Algorithm to insert cluster $C$; case of non-crossing clusters.}
    \label{fig:algo-upd-ncr}
\end{figure}

\begin{lemma}
\label{lemma:non-cr-corr}
The algorithm \emph{Insert-NonCrossing} (Figure~\ref{fig:algo-upd-ncr}) is correct. 
\end{lemma}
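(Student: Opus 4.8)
The plan is to verify two claims: that $\hreg{S\cup\{C\}}{C}$ and the resulting ranges are produced correctly, and that on termination the conflict graph stores exactly the conflicts between the ranges of $\HVD{S\cup\{C\}}$ and the clusters of $F\setminus(S\cup\{C\})$. The first claim follows from the Property after Definition~\ref{def:active}: if $C$ has a conflict then the stored root $x$ is, unless $\act{S}{C}=\fskel{C}$, a $C$-mixed vertex of $\HVD{S\cup\{C\}}$ and hence a valid seed from which the boundary of $\hreg{S\cup\{C\}}{C}$ is traced in the standard way~\cite{PL04}; if $C$ has no conflict then $\hreg{S\cup\{C\}}{C}=\emptyset$ by Lemma~\ref{cor:one-conf}. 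The same tracing outputs the deleted and the newly created ranges.

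For the conflict graph I would first record the monotonicity principle: inserting $C$ can only shrink Voronoi regions, so $\hreg{S\cup\{C\}\cup\{Q\}}{Q}\subseteq\hreg{S\cup\{Q\}}{Q}$, and therefore, writing $y$ for the root of $\act{S}{Q}$, the root of $\act{S\cup\{C\}}{Q}$ is either $y$ itself or a point deeper in $\act{S}{Q}$, or else $\act{S\cup\{C\}}{Q}$ is empty. The pivotal claim is then that the conflict of a cluster $Q$ must be revised precisely when the range $f$ that contains its old root $y$ is deleted. Indeed $y$ survives as the new root exactly when $\df{y,C}\ge\df{y,Q}$; and since $y$ is a $Q$-mixed vertex we have $\df{y,Q}=\df{y,p}$ for the owner $p$ of $f$, so the opposite inequality $\df{y,C}<\df{y,Q}$ is equivalent to $y$ falling into $\hreg{S\cup\{C\}}{C}$, which forces $f$ to be deleted. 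Consequently a conflict whose range survives is unchanged and may be ignored, while iterating over the conflicts stored at the deleted ranges reaches every conflict that changes: those with $y$ not eaten keep their root and are merely reattached to the new range containing $y$, and those with $y$ eaten have their root recomputed.

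It remains to certify the recomputation. On $\act{S}{Q}$, which by Lemma~\ref{prop:con-comp} is the connected portion of $\fskel{Q}$ lying in $\hreg{S\cup\{Q\}}{Q}$, the nearest cluster in $S\cup\{C\}\cup\{Q\}$ is either $Q$ or $C$; thus the locus where $Q$ remains the nearest is $\act{S}{Q}\cap\hreg{\{C,Q\}}{Q}$. Applying Lemma~\ref{prop:con-comp} to the pair $\{C,Q\}$ shows that $\hreg{\{C,Q\}}{Q}$ meets $\fskel{Q}$ in a single connected subtree, so this locus, being an intersection of two subtrees of $\fskel{Q}$, is itself connected. When $y$ has been eaten this locus is entered from $y$ across a unique edge $uv$ carrying a point equidistant from $Q$ and $C$, which the traversal of $\act{S}{Q}$ detects and the segment query of Definition~\ref{def:segm-query} in $\FVD{C}$ locates exactly; that point is the new root. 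If no such edge exists the locus is empty, $\act{S\cup\{C\}}{Q}=\emptyset$, and $Q$ acquires no conflict. By Lemma~\ref{cor:one-conf} the recomputed root yields a single conflict, so attaching it to the unique new range containing it keeps the graph consistent.

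The step I expect to be the main obstacle is justifying that the recomputation may inspect only the old active subtree $\act{S}{Q}$ rather than all of $\fskel{Q}$, and that a single segment query suffices. Both points hinge on combining the monotonicity principle, so that the new root neither retreats before $y$ nor appears outside $\act{S}{Q}$, with the connectivity furnished by Lemma~\ref{prop:con-comp} for the pair $\{C,Q\}$, so that the surviving part of $Q$'s region is entered exactly once. A secondary delicate point is the equivalence between $y$ being eaten by $C$ and the range of $y$ being deleted, which is what licenses iterating only over the deleted ranges and which is exactly where the identity $\df{y,Q}=\df{y,p}$ at the $Q$-mixed vertex $y$ enters.
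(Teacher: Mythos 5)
Your proof is correct and follows essentially the same route as the paper's: justify the seed point $x$ for tracing, reduce conflict-graph repair to the conflicts stored at deleted ranges, observe that at the $Q$-mixed root $y$ one has $\df{y,Q}=\df{y,p}$ so only $C$ can displace $Q$, and locate the new root on the unique transition edge of $\act{S}{Q}$ via a segment query in $\FVD{C}$. The only difference is one of emphasis: where the paper simply cites the Clarkson--Shor framework for the sufficiency of processing deleted ranges and asserts that $\act{S\cup\{C\}}{Q}$ is ``clearly'' a subtree of $\act{S}{Q}$, you spell out both points (via the equivalence between $y$ being absorbed and its range being deleted, and via connectivity from Lemma~\ref{prop:con-comp} applied to the pair $\{C,Q\}$), which is a harmless elaboration rather than a different argument.
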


\begin{proof}
The algorithm \emph{Insert-NonCrossing}, to insert a cluster $C$, processes the unique conflict $(g,x,C)$ of cluster $C$. 
Having point $x$ as a starting point for tracing the boundary of $\hreg{s \cup\{C\}}{C}$ (Line 2) is correct, since 
$x$ lies on that boundary as implied by Definition~\ref{def:confl}.
The Clarkson-Shor framework~\cite{Clarkson_rand_sampling_2,CMS93} (see also Section~\ref{sec:ric}) implies that processing the conflicts of every deleted 
range (the loop in Lines~4--16) is enough for repairing the conflict graph.
The only non-trivial parts of this loop
are: the condition in Line~8, and handling of its two possible outcomes respectively in Lines~9--10 and Lines 12--17.

In Line~8, it is enough to compare the distance from $y$ to only $p$ and $C$
because no other cluster may become the closest to $y$ as a result of
inserting $C$. Therefore by checking the condition in Line~8, we find out whether the owner of the face containing point $y$ 
in $\HVD{S \cup \{C\}}$ stays the same as in $\HVD{S}$, or $y$ actually belongs to the new region $\hreg{s \cup\{C\}}{C}$. 
The correctness of Lines~9--10, i.e, the case when  point $y$ stays in the region of $p$,  is easy to see. 
Suppose that $\df{y, C} < \df{y,Q} = \df{y,p}$, and that $\act{S \cup \{C\}}{Q} \neq \emptyset$ (Lines~12--17). 
In this case $y$ is no longer a part of the (updated) diagram $\HVD{S \cup \{C\}}$.
Clearly, $\act{S \cup \{C\}}{Q}$ is a subtree of  $\act{S}{Q}$, thus, there is exactly 
one edge $uv$ of $\act{S}{Q}$ such that 
$u\not\in\act{S \cup \{C\}}{Q}$ and $v\in\act{S \cup \{C\}}{Q}$. 
Edge $uv$ satisfies the condition of Line~12 by definition of an active subtree (see Definition~\ref{def:active}). 
The root of an active subtree cannot coincide 
with a vertex of $\fskel{C}$ due to the general position
assumption.
 Note that if we had no general position assumption, the root of an active subtree could  coincide with a vertex of $\fskel{C}$, but this  simple case  can be easily detected in Line~12 by checking whether the vertices of $\fskel{C}$ 
visited by the search are equidistant to $Q$ and to $C$.
 \qed
\end{proof}

The main result of this section is the following theorem, which we
prove in the remaining part of this section.

\begin{theorem}
  \label{th:time-compl}
The Hausdorff Voronoi diagram of $k$ non-crossing clusters
of total complexity $n$ can be computed in expected 
 $O(n\log{n} + k\log{n}\log{k})$ time and deterministic $O(n)$ 
space. 
\end{theorem}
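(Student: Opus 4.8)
The plan is to instantiate the RIC framework with a conflict graph (Section~\ref{sec:ric}) and then refine its generic time bound, because the work done inside one insertion is not $O(1)$ here. First I would check the framework's prerequisites. By Observation~\ref{obs:vb-face-def} every range is defined by a constant number of clusters, so ranges have constant description size; by Lemma~\ref{cor:one-conf} every not-yet-inserted cluster owns at most one arc of the conflict graph; and Lemma~\ref{lemma:non-cr-corr} guarantees that one insertion correctly rebuilds the affected ranges and re-points the affected conflicts. Hence the combinatorial update condition holds and \cite[Theorem~5.2.3]{BY1998book} applies once $f_0(r)$ is known.

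Second, I would evaluate $f_0(r)$, the expected number of ranges of $\HVD{S}$ for a uniformly random $r$-element subset $S\subseteq F$. For non-crossing clusters the complexity of $\HVD{S}$, and hence of its visibility refinement, is linear in the number of points of $S$; since each cluster lies in $S$ with probability $r/k$, that expected point count is $rn/k$, so $f_0(r)=O(rn/k)$. Substituting into the framework gives an expected number of created ranges $O\!\bigl(\sum_{r=1}^{k} f_0(r)/r\bigr)=O(n)$. This single estimate controls both the total number of range creations (hence the total amount of region tracing) and the space; together with the $O(n)$ space for the diagrams $\FVD{C}$ and their centroid decompositions and the $O(k)$ conflict arcs it yields the claimed deterministic $O(n)$ space.

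The core is the running time, where each conflict update may cost more than $O(1)$: deciding on which side of $\bh{Q,C}$ a point lies needs a point location in $\FVD{C}$, and relocating an active root needs a segment query, both $O(\log n)$ (Definition~\ref{def:segm-query}). I would group the work as follows. The number of conflicts ever processed is $O(k\log k)$ in expectation: by backward analysis, after a random $r$-subset each of the at most $k-r$ surviving clusters holds one conflict $(f,y,Q)$ whose range $f$ and active root $y$ are both pinned down by $O(1)$ already-inserted clusters, so the boundary insertion processes it with probability $O(1/r)$, and $\sum_r O((k-r)/r)=O(k\log k)$. Among these, only genuine relocations of an active root trigger a segment query, so the segment queries cost $O(k\log k\log n)$. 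For the traversals I would use that $\act{S\cup\{C\}}{Q}\subseteq\act{S}{Q}$: the active subtree of a fixed $Q$ only shrinks, so each edge of $\fskel{Q}$ is cut off, hence visited, at most once over the whole run---$O(n)$ vertex visits, each carrying an $O(\log n)$ point location, for $O(n\log n)$. The same $O(n)$ bound on created ranges, multiplied by the $O(\log n)$ cost of locating each new Hausdorff vertex, accounts for the region tracing, again $O(n\log n)$. Adding the $O(n\log n)$ preprocessing of all farthest diagrams and centroid decompositions, the four terms sum to $O(n\log n+k\log n\log k)$.

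The main obstacle is exactly that the generic RIC bound of Section~\ref{sec:ric} charges $O(1)$ per created range and per destroyed arc, which fails here, so the two logarithmic primitives must be counted by hand and without double counting. Two points need care. First, I must bound the number of \emph{segment queries} by $O(k\log k)$ rather than by the total number of processed conflicts at $O(\log n)$ each (which would cost an extra factor); this needs the backward-analysis observation that an active root is defined by a single already-inserted cluster, so it relocates only $O(\log k)$ times in expectation. Second, the telescoping of the traversal cost to $O(n)$ visits hinges on the monotonicity $\act{S\cup\{C\}}{Q}\subseteq\act{S}{Q}$, which is itself licensed by the single-connected-component guarantee of Lemma~\ref{prop:con-comp}: it forces the shrunken active subtree to stay a single subtree hanging below the old root, so the discarded part is a set of edges that never return, validating the ``each edge at most once'' charging. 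Getting both charging schemes right is where the real work lies.
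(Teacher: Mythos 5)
Your proposal is correct and follows essentially the same route as the paper: it recognizes that the generic RIC time bound cannot be applied because each primitive (point location, segment query) costs $O(\log n)$ rather than $O(1)$, and it then uses the paper's two charging schemes --- backward analysis giving $O(k/i)$ expected conflict deletions at step $i$ (hence $O(k\log k)$ total, as in Lemma~\ref{lemma:conf_size}) and the fact that each edge of $\fskel{Q}$ leaves the active subtree at most once (hence $O(n)$ traversal steps, as in Lemmas~\ref{lemma:upd_cond} and~\ref{lemma:updating-cg}) --- together with the $O(k)$ bound on conflict arcs for the deterministic $O(n)$ space. The only cosmetic differences are that the paper imports the $O(n\log n)$ diagram-update bound from~\cite{CKLP-alg} rather than re-deriving it via $f_0(r)$, and explicitly accounts (via the $D_i$ term) for conflicts deleted without replacement, a lower-order $O(k)$ contribution your argument absorbs implicitly.
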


Consider a random permutation $\{C_1,\dots,C_{\nclus}\}$ of the input family $\InpSet$,
and the sequence 
$\{\InpSet_0, \dots,\InpSet_{\nclus}\}$, 
where $\InpSet_0 = \emptyset$, $\InpSet_{i} = \InpSet_{i-1} \cup \{C_{i}\}$ and $\InpSet_{\nclus} = \InpSet$.

At step $i$ we insert cluster $C_i$ by performing the algorithm
\emph{Insert-NonCrossing} (Figure~\ref{fig:algo-upd-ncr}).
However, the update condition of the RIC framework 
does not hold,
thus, we cannot directly use it 
to obtain the time 
complexity of our algorithm.
\deleted{
The update condition of the RIC framework 
does not hold in our setting,
thus we cannot use the RIC framework directly to obtain the 
complexity of the algorithm.
}
To bound the expected total time required for updating the diagram, we
use the analysis of~\cite{CKLP-alg}, which applies to any randomized
incremental construction algorithm for the Hausdorff Voronoi diagram,
independently of the auxiliary data structure being used. 
The following lemma summarizes the result of~\cite{CKLP-alg}. 

\begin{lemma}[\cite{CKLP-alg}]
\label{lemma:updating-diagram}
During the course of  a randomized incremental construction 
of the Hausdorff Voronoi diagram of a family of non-crossing clusters, 
the total expected number of updates made to the diagram  is $O(n)$, and 
the total expected
time required for updating the diagram is $O(n \log n)$, where $n$ is the total 
number of points in all clusters.
\end{lemma}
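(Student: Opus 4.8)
The plan is to bound the diagram updates by the \emph{Clarkson--Shor} range-counting formula stated in the excerpt (item~(1)), treating each cluster as one object and each refined face as one range, and then to observe that the per-update work reduces to a single segment query. Since a range is defined by $O(1)$ clusters by Observation~\ref{obs:vb-face-def}, item~(1) applies: the expected number of ranges created over the whole insertion sequence is $O\!\left(\sum_{r=1}^{\nclus} f_0(r)/r\right)$, where $f_0(r)$ is the expected number of refined faces of $\HVD{\InpSet_r}$ for the random $r$-prefix $\InpSet_r$ of the permutation. I emphasize that item~(1) counts ranges only and does \emph{not} rely on the update condition of the framework (which fails here); the failing update condition affects only item~(2), which I will deliberately avoid.

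The key estimate is $f_0(r)=O\!\left((r/\nclus)\,n\right)$. First, non-crossing is a pairwise property, so any prefix $\InpSet_r\subseteq\InpSet$ is again a family of pairwise non-crossing clusters; hence $\HVD{\InpSet_r}$ has combinatorial complexity linear in the number $n_r$ of points contained in the clusters of $\InpSet_r$, and the visibility decomposition inflates the face count only by a constant, so the number of refined faces of $\HVD{\InpSet_r}$ is $O(n_r)$. Second, the prefix $\InpSet_r$ of a random permutation is a uniform $r$-subset, so each cluster $C$ lies in $\InpSet_r$ with probability $r/\nclus$; by linearity of expectation $\mathbb{E}[n_r]=\sum_{C\in\InpSet}|C|\cdot(r/\nclus)=(r/\nclus)\,n$. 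Combining, $f_0(r)=\mathbb{E}[O(n_r)]=O\!\left((r/\nclus)\,n\right)$. The only care needed is that clusters carry different point counts, so the sample size must be measured in points rather than clusters; linearity of expectation over the permutation handles this cleanly.

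Substituting into item~(1) gives the number of structural updates directly:
\[
O\!\left(\sum_{r=1}^{\nclus} \frac{f_0(r)}{r}\right)
= O\!\left(\sum_{r=1}^{\nclus} \frac{(r/\nclus)\,n}{r}\right)
= O\!\left(\sum_{r=1}^{\nclus} \frac{n}{\nclus}\right)
= O(n).
\]
Each refined face is deleted at most once after its creation, so the number of deletions is bounded by the number of creations, and the total expected number of updates is $O(n)$, as claimed. For the running time I would charge to each created or destroyed feature its cost of construction: tracing a boundary edge is $O(1)$, while each new $C$-mixed vertex on the boundary of $\hreg{\InpSet_r\cup\{C\}}{C}$ is produced by a single segment query in $\FVD{C}$ costing $O(\log|C|)=O(\log n)$ via the centroid decomposition (Definition~\ref{def:segm-query}). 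Since the total number of features touched is $O(n)$ in expectation and each costs $O(\log n)$, the total expected time for diagram updates is $O(n\log n)$. The main obstacle is precisely this charging step: I must confirm that inserting $C$ traces only the boundary of its (connected) region and spends no extra time on parts discarded during root recomputation, so that per-step work is bounded by the features actually created or destroyed. This is where the connectedness of regions of non-crossing clusters (Lemma~\ref{prop:con-comp} together with the active-subtree property) is essential, since it guarantees a single clean starting vertex and a traversal whose cost matches the structural change, rather than the whole current diagram.
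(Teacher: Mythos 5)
This lemma is not proved in the paper at all: it is imported verbatim from \cite{CKLP-alg}, so there is no in-text proof to match against, and your blind reconstruction in fact follows essentially the same route as that reference --- Clarkson--Shor item~(1) with $f_0(r)=O(nr/k)$ is exactly the backwards-analysis computation used there, and you are right that it needs only the constant-definer property (Observation~\ref{obs:vb-face-def}), not the update condition, which fails in this setting and is the whole reason the paper replaces item~(2) by Lemmas~\ref{lemma:graph_size}--\ref{lemma:updating-cg}. Two small points of precision: the visibility decomposition does not inflate the face count ``by a constant'' relative to the unrefined faces --- it produces a number of refined faces proportional to the number of boundary vertices of the diagram, i.e., to its total complexity --- but that is still $O(n_r)$, so your estimate stands; and the step you flag as the main obstacle (that tracing the connected new region costs $O(\log n)$ per feature created or destroyed, with the active-subtree/root search billed to conflict-graph maintenance rather than to diagram updates, cf.\ Lemma~\ref{lemma:upd_cond}) is precisely the accounting carried out in \cite{CKLP-alg}, so your scoping of what the lemma covers is correct.
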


We now bound the work to update the conflict graph.
We first give Lemmas~\ref{lemma:graph_size}--\ref{lemma:conf_size},
and then use them in Lemma~\ref{lemma:updating-cg} to derive this
bound.

\begin{lemma}
    \label{lemma:graph_size}
 The number of  
 arcs in the conflict graph, at any step, is $O(k)$.
 \end{lemma}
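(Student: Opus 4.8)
The plan is to derive the bound directly from Lemma~\ref{cor:one-conf}, which already carries the essential combinatorial content. Recall that, by Definition~\ref{def:confl}, an arc of the conflict graph joins a range $\f$ of $\HVD{S}$ to a not-yet-inserted cluster $C \in F \setminus S$ precisely when $(\f,x,C)$ is a conflict, i.e., when the root $x$ of the active subtree $\act{S}{C}$ lies in $\f$. Hence the number of arcs equals the total number of conflicts, and the key is to count these by indexing over the uninserted clusters rather than over the ranges.

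First I would invoke Lemma~\ref{cor:one-conf}: each cluster $C \in F \setminus S$ is in conflict with at most one range of $\HVD{S}$. Consequently every uninserted cluster contributes at most one arc to the conflict graph. Since the input family has $k$ clusters in total, we have $|F \setminus S| \le k$, and summing the at-most-one arc over all uninserted clusters gives at most $k$ arcs. This establishes the claimed $O(k)$ bound at every step of the incremental construction.

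There is no genuine obstacle at this level: the only nontrivial ingredient is Lemma~\ref{cor:one-conf} itself, which rests on Lemma~\ref{prop:con-comp} together with the connectedness of the Voronoi regions of non-crossing clusters—the active subtree $\act{S}{C}$ has a single root $x$, so $C$ conflicts with the unique range containing that root and with no other. Once this uniqueness is in hand, the counting is immediate. The one subtlety worth flagging is the choice of counting index: the number of ranges may be $\Theta(n)$, so counting arcs per range would give a much weaker bound, whereas counting per uninserted cluster—each of which owns at most one arc—yields exactly $O(k)$. Ranges that conflict with no cluster simply contribute no arcs, so they do not affect the count.
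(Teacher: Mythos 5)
Your proof is correct and follows exactly the paper's argument: the paper's own proof is a one-line appeal to Lemma~\ref{cor:one-conf}, and you spell out the same counting (at most one arc per uninserted cluster, hence at most $k$ arcs in total). No differences worth noting.
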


\begin{proof}
The statement is a direct implication of Lemma~\ref{cor:one-conf}. 
\end{proof}

\begin{lemma}
  \label{lemma:upd_cond}
  Updating the conflict graph at step $i$ 
  requires  
$O(\log{n}(N_i +   R_i))$ 
  time, where $N_i$ is the total number of edges dropped out of the active subtrees of clusters in  $F \setminus F_i$ 
  at step $i$, and   
 $R_i$ is the total number of conflicts deleted at
  step $i$.  
\end{lemma}

\begin{proof}
Updating the conflict graph corresponds to two nested for-loops
in Lines~4--16 of the algorithm in Figure~\ref{fig:algo-upd-ncr}.
Clearly the inner loop  (Lines~6--15) is performed $O(R_i)$ times in total. 
Inside this loop, a 
breadth-first search is performed 
that spends $O(\log{n})$ time per visited edge. 
By Lemma~\ref{cor:one-conf}, one active subtree is considered at most once during one step. 
All the visited edges, except the last one, are dropped out of the respective active subtree.  
It remains to show, that, except for the breath-first search, 
the rest of the work in any execution of the inner loop requires $O(\log{n})$ time. Indeed, 
it is a point location in Line~8, a segment query in  $\FVD{C_i}$ in Line~13, 
 a binary search 
in Line~9 or Line~15, and an insertion of a new conflict in  Line~10 or Line~17. 
For information on the segment query, see Section~\ref{sec:prelim}. 
\qed 
\end{proof}

\begin{lemma}
  \label{lemma:conf_size} 
  The expected total number of conflicts   deleted at step $i$
  of the randomized incremental algorithm is $O(k/i) + D_i$, where $D_i$ is the number of clusters in $F \setminus F_i$ 
  that used to have a conflict until step $i$, and do not have it any more. 
\end{lemma}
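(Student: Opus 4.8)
The plan is to decompose $R_i$ according to the fate of each deleted conflict and to bound the dominant part by a backward (Clarkson--Shor) analysis performed on $\HVD{F_i}$.

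First I would set up the count. By Lemma~\ref{cor:one-conf} every cluster of $F\setminus F_{i-1}$ owns at most one conflict, so $R_i$ is the number of clusters $Q\in F\setminus F_{i-1}$ whose unique conflict $(f,y,Q)$ in $\HVD{F_{i-1}}$, with owner $p$ (so $f\subset\hreg{F_{i-1}}{p}$), has its range $f$ destroyed by the insertion of $C_i$. I split these clusters into three groups: (a) the inserted cluster $C_i$, contributing at most $1$; (b) clusters $Q\in F\setminus F_i$ that still own a conflict $(f',y',Q)$ in $\HVD{F_i}$, which I call \emph{transferred}; and (c) clusters $Q\in F\setminus F_i$ owning no conflict in $\HVD{F_i}$ --- by definition these are exactly the $D_i$ clusters (a disappearing $Q$ has $\act{F_i}{Q}=\emptyset$, so its old root $y$ falls in $\hreg{F_i}{C_i}$ and its range is indeed destroyed, hence counted in $R_i$). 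Writing $T_i$ for the number of transferred conflicts, $R_i\le 1+T_i+D_i$, and it remains to prove $E[T_i]=O(k/i)$.

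The structural heart is the claim that for every transferred $Q$ the face $f'=f_i(Q)$ holding its new root $y'$ in $\HVD{F_i}$ is created at step $i$, hence is \emph{defined by} $C_i$. I would verify this by matching the two branches of the insertion algorithm. If the root moves ($y'\neq y$), then $C_i$ intercepts $\fskel{Q}$ strictly before the old root: the old root $y$ then lies in $\hreg{F_i}{C_i}$, while $y'$ is a $Q$-mixed vertex equidistant to $Q$ and to $C_i$ on $\bd\hreg{F_i}{C_i}$, so $f'$ is a face of the brand-new region $\hreg{F_i}{C_i}$. If the root is unchanged ($y'=y$) but its range changed, then $f\ni y$ was destroyed only because the visibility decomposition of $\hreg{F_i}{p}$ was re-subdivided when $C_i$ appeared on its boundary; since $y$ lay in $f$ in $\HVD{F_{i-1}}$ but in $f'\neq f$ in $\HVD{F_i}$, the face $f'$ cannot belong to $\HVD{F_{i-1}}$. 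In either case $f'\in\HVD{F_i}\setminus\HVD{F_{i-1}}$; as $f'$ survives in $\HVD{F_i}$, no cluster of $F_i\supseteq F_{i-1}$ is in conflict with it, so the only possible reason for $f'\notin\HVD{F_{i-1}}$ is that $C_i$ is one of its $O(1)$ defining clusters (Observation~\ref{obs:vb-face-def}).

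Granting the claim, $T_i$ is at most the number of $Q\in F\setminus F_i$ whose root-face $f_i(Q)$ is defined by $C_i$, a quantity readable from $\HVD{F_i}$ alone. I then invoke backward analysis: conditioned on the \emph{set} $F_i$, the element $C_i$ is uniform over $F_i$, so the conditional expectation is
\[
\frac1i\sum_{P\in F_i}\bigl|\{\,Q\in F\setminus F_i : P\text{ defines }f_i(Q)\,\}\bigr|
=\frac1i\sum_{Q\in F\setminus F_i}\bigl|\{\text{clusters defining }f_i(Q)\}\bigr|
=\frac{O(k-i)}{i}=O(k/i),
\]
using that every face has $O(1)$ defining clusters. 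Averaging over $F_i$ preserves the bound, and combining with groups (a) and (c) yields $E[R_i]=O(k/i)+D_i$.

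I expect the main obstacle to be the second case of the structural claim: when the active-subtree root stays fixed yet its enclosing range is destroyed purely because the visibility decomposition of a neighboring region is re-subdivided by $C_i$. The delicate point is that $f'$ may border exactly the same old clusters as $f$ along its \emph{edges}, so its dependence on $C_i$ enters only through a new boundary vertex of $\hreg{F_i}{p}$ and the visibility segment incident to it; I must check that this alone already makes $C_i$ a defining cluster of $f'$. Confirming, in the first case, that a moving root always vacates its old face into $\hreg{F_i}{C_i}$---so that the destruction of the old range is genuine and the transfer is really caught by the algorithm---is the other point needing care; the backward-analysis computation is then routine.
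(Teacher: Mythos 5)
Your proof is correct and follows essentially the same route as the paper's: split the deleted conflicts into those that re-emerge as conflicts with ranges created at step $i$ plus the $D_i$ disappearing ones, then bound the former by backwards analysis over the $O(k)$ conflicts present after step $i$, using that each range has $O(1)$ defining clusters. You are merely more explicit than the paper about why a transferred conflict must land on a \emph{newly created} range and about summing the $1/i$ probability over all defining clusters rather than just the owner.
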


\begin{proof}
Each conflict deleted at step $i$ 
either induces a conflict with a (new) range in  
$\Ranges{F_{i}} \setminus \Ranges{F_{i-1}}$, 
or the corresponding cluster is counted by $D_i$. Each cluster is counted at most once by $D_i$ 
due to Lemma~\ref{cor:one-conf}.  
Thus, the  total number of conflicts deleted at step $i$ equals   the total number 
 of conflicts of ranges inserted at step $i$ plus $D_i$. 
 We bound the expectation of the former number by backwards analysis. 
After step $i$ is performed, the number of conflicts in the conflict graph is $O(k)$ by Lemma~\ref{lemma:graph_size}. 
Fix one conflict of some cluster $C_k$, $k \geq i$; it is between $C_k$ and  a range of a cluster $C_j$, $j \leq i$; 
since the insertion order of clusters is random, the probability for $C_j$ to be inserted at step $i$ (i.e., to be $C_i$ in our notation) is $O(1/i)$. Summing this probability for all the conflicts, we obtain
that the expected number of conflicts inserted at step $i$ is  
 $O(k/i)$. \qed
\end{proof}

\begin{lemma}
\label{lemma:updating-cg}
The expectation of the total time required to update the conflict graph throughout the algorithm is  $O(n\log{n} + k\log{n}\log{k})$. 
\end{lemma}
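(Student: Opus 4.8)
The plan is to sum the per-step bound of Lemma~\ref{lemma:upd_cond} over all $k$ insertion steps and then control the two resulting quantities $\sum_i N_i$ and $\sum_i R_i$ separately. By Lemma~\ref{lemma:upd_cond}, the total time spent updating the conflict graph is $O\bigl(\log n \sum_{i=1}^k (N_i + R_i)\bigr)$, so it suffices to show $\sum_i N_i = O(n)$ deterministically and $\mathbb{E}\bigl[\sum_i R_i\bigr] = O(k\log k)$; substituting these gives $O(n\log n) + O(k\log n\log k)$, exactly the claimed bound.

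First I would bound $\sum_i N_i$. The key observation is that the active subtree $\act{F_j}{Q}$ of a not-yet-inserted cluster $Q$ shrinks monotonically as clusters are inserted: inserting a cluster can only shrink $Q$'s potential region, and hence only shrink $\act{\cdot}{Q}$. Consequently each edge of $\fskel{Q}$ is dropped out of $Q$'s active subtree at most once over the whole algorithm (and only while $Q \in F \setminus F_i$). Since $\fskel{Q}$ is a tree with $O(|Q|)$ edges, the total number of edge-drops summed over all clusters is $\sum_{Q} O(|Q|) = O(n)$. Thus $\sum_i N_i = O(n)$ with no expectation needed, and this term contributes $O(n\log n)$.

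Next I would bound $\mathbb{E}\bigl[\sum_i R_i\bigr]$ using Lemma~\ref{lemma:conf_size}, which gives $\mathbb{E}[R_i] = O(k/i) + \mathbb{E}[D_i]$. Summing the first term over $i$ is a harmonic sum, $\sum_{i=1}^k k/i = k H_k = O(k\log k)$. For the second term, the crucial claim is $\sum_i D_i \le k$. This holds because a cluster $Q$ counted by $D_i$ has just lost its (unique, by Lemma~\ref{cor:one-conf}) conflict, and by Lemma~\ref{cor:one-conf} the absence of a conflict means $\act{F_i}{Q} = \emptyset$, i.e.\ $\hreg{F_i\cup\{Q\}}{Q} = \emptyset$. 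By the same monotonicity as above, once $Q$'s active subtree is empty it stays empty at every later step, so $Q$ never regains a conflict and is counted by $D_i$ for at most one index $i$. Hence $\sum_i D_i \le k$, giving $\mathbb{E}\bigl[\sum_i R_i\bigr] = O(k\log k) + k = O(k\log k)$, which contributes $O(k\log n\log k)$.

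The main obstacle is the bound $\sum_i D_i \le k$: it does not follow from the statement of Lemma~\ref{lemma:conf_size} alone, but requires the monotonicity argument combined with the implication in Lemma~\ref{cor:one-conf} (no conflict $\Rightarrow$ empty region) to conclude that the loss of a conflict is \emph{permanent}. Everything else is a routine combination of the harmonic-series estimate with the deterministic counting of tree edges; multiplying the two bounds by the $\log n$ factor from Lemma~\ref{lemma:upd_cond} and adding completes the proof.
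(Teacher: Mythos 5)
Your proposal is correct and follows essentially the same route as the paper: sum the bound of Lemma~\ref{lemma:upd_cond} over all steps, charge $\sum_i N_i$ to the $O(n)$ total edges of the farthest Voronoi trees (each dropped from an active subtree at most once), and bound $\mathbb{E}\bigl[\sum_i R_i\bigr]$ via Lemma~\ref{lemma:conf_size}, the harmonic sum $\sum_i k/i = O(k\log k)$, and the observation that $\sum_i D_i \le k$ because an active subtree becomes empty at most once. Your explicit justification that the loss of a conflict is permanent (via monotone shrinking of active subtrees) is a slightly more detailed rendering of the paper's one-line remark, but it is the same argument.
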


\begin{proof}
Summing the bound of  Lemma~\ref{lemma:upd_cond} for all steps, we obtain that
 the   total time required to update  the conflict graph during all steps of the algorithm is 
  $\sum\limits_{i = 1}^k{O\left(\log{n}(N_i + R_i)\right)}$.
   An edge of $\fskel{C}$ of any cluster $C \in F$ 
  is dropped out from the active subtree of $C$ at most once. 
  The total number of edges in the farthest Voronoi diagrams
  of all clusters in $F$ is $O(n)$. 
  Thus the above sum is proportional to
  $O(\log{n})(n +\sum\limits_{i = 1}^k{R_i})$. 
  The expectation of this number is, by Lemma~\ref{lemma:conf_size},
  $O\left(\left(n+ \sum\limits_{i = 1}^k\left({k{/}i + D_i}\right)\right)\log{n}\right)$. Note that $\sum\limits_{i=1}^k{D_i} \leq k$, 
  since the  active subtree of a cluster can become empty at most once.  
  The claimed time complexity follows. \qed
\end{proof}

We remark that the total number of conflicts created throughout our algorithm is $O(n+k\log k)$ 
(as evident by the proof of Lemma~\ref{lemma:updating-cg}).

\begin{proof}[of Theorem~\ref{th:time-compl}]
The total expected time required for updating the diagram during all the steps of the algorithm is 
$O(n\log n)$ due to Lemma~\ref{lemma:updating-diagram}.  
The total expected time required for updating the conflict graph is  $O(n\log{n} + k\log{n}\log{k})$ due to 
Lemma~\ref{lemma:updating-cg}.

  The space requirement at any step is proportional to 
  the combinatorial complexity of the Hausdorff Voronoi diagram, which is $O(n)$, plus the 
 total number of 
 arcs of the conflict graph at this step, which is at most $k$ by Lemma~\ref{lemma:graph_size}. 
 Hence the claimed $O(n)$ bound holds. \qed 
 \end{proof}

\subsection{Adapting the algorithm of Section~\ref{sec:non-cr} to using a history graph}
\label{subsec:ncr-hist}

Let $\Hist{\InpSet_i}$ denote the history graph
at step $i$ of the incremental algorithm.
  $\Hist{\InpSet_0}$ is a single node that corresponds
 to the whole $\mathbb{R}^2$.  
  For $i \in \{1,\dots,{\nclus}\}$, $\Hist{\InpSet_i}$ 
consists of all nodes and
  arcs of $\Hist{\InpSet_{i-1}}$, and in addition it contains the following:
      (i) A node for each new range in $\Ranges{\InpSet_{i}} \setminus \Ranges{\InpSet_{i-1}}$.
      These nodes are called the \emph{nodes of level $i$}.
     (ii) An arc connecting a deleted range  
      $\f \in \Ranges{\InpSet_{i-1}} \setminus \Ranges{\InpSet_{i}}$
      to 
every new range $\f' \in \Ranges{\InpSet_{i}} \setminus \Ranges{\InpSet_{i-1}}$ such that $\f'$ intersects $\f$. 
  
Suppose that $\HVD{\InpSet_{i-1}}$ and $\Hist{\InpSet_{i-1}}$ have
already been computed. 
To insert the next cluster $C_i$, we traverse $\Hist{\InpSet_{i-1}}$
from root to a leaf. 
Simultaneously, we 
traverse $\fskel{C_i}$, keeping track of the root $x$ of the 
active subtree $\act{\InpSet_j}{C_i}$ at  the current level $j$ of  
$\Hist{\InpSet_{i-1}}$.
When we reach a leaf of $\Hist{\InpSet_{i-1}}$, we trace the boundary of the Voronoi region 
$\hreg{\InpSet_i}{C_i}$, starting at the  root $x$ of $\act{\InpSet_i}{C_i}$, and 
update $\Hist{\InpSet_{i-1}}$ to become $\Hist{\InpSet_i}$.

In detail, the procedure at level $j$ is as follows:
Let $\f$ be the face in $\HVD{\InpSet_j}$ that contains the root $x$ of
$\act{\InpSet_i}{C_i}$. 
  Suppose that $\f$ is deleted at  step $\level$.
If {$\df{x, C_i} < \df{x,C_{\level}}$}, we search for the child $\f'$ of $\f$,
that has the same owner as $\f$, and contains $x$;
we move to the level $\level$, keeping $x$ intact,  and updating its face to be $
\f'$.
Else we search for the 
root $z$ of the (new) 
active subtree  
$\act{\InpSet_{\level} \cup \{C_i\}}{C_i}$ (the procedure to do this is the same as for the conflict graph, 
see Section~\ref{sec:ins-conf-ncr}). 
If $z$ is found,
we move to  level $\level$, replace $x$ by $z$ and the face $\f$ by $
\f' \subset \hreg{\InpSet_{\level}}{C_{\level}}$ that contains  $z$.
If $z$ is not found, 
the active subtree $\hreg{F_i}{C_i} = \emptyset.$

\begin{theorem}
\label{thm:hist-ncr}
The Hausdorff Voronoi diagram of $k$ non-crossing clusters
of total complexity $n$ can be computed by RIC with the history graph in expected 
 $O(n\log{n} + k\log{n}\log{k})$ time and expected $O(n)$ space. 
\end{theorem}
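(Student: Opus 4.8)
The plan is to mirror the conflict-graph analysis of Theorem~\ref{th:time-compl} as closely as possible, isolating the only genuinely new cost, namely the root-to-leaf descent through $\Hist{\InpSet_{i-1}}$ performed at each insertion, and charging it against the very same quantities $N_i$ and $R_i$ that governed the conflict-graph version. \emph{Correctness.} When inserting $C_i$ I would maintain the invariant that, upon reaching level $j$, the tracked point $x$ is the root of $\act{\InpSet_j}{C_i}$ and $\f$ is the face of $\HVD{\InpSet_j}$ containing $x$. The two branches of the level-$j$ procedure preserve this invariant: if $\df{x,C_i}<\df{x,C_\ell}$ then $x$ is still closer to its owner than to the newly inserted $C_\ell$, so $x$ remains the active-subtree root while its enclosing face is merely refined, and we descend to the unique same-owner child containing $x$; otherwise $x$ has been conquered and we recompute the new root exactly as in Lines~12--17 of \emph{Insert-NonCrossing}, whose correctness is Lemma~\ref{lemma:non-cr-corr}. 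Reaching a leaf thus yields the root of $\act{\InpSet_i}{C_i}$, from which $\hreg{\InpSet_i}{C_i}$ is traced as before.

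\emph{Time.} I would split the running time into three parts. (i) Tracing the boundaries of the new regions and updating the diagram costs expected $O(n\log n)$ by Lemma~\ref{lemma:updating-diagram}, which is stated for any RIC variant regardless of the auxiliary structure. (ii) The edges leaving the active subtrees are charged exactly as in Lemmas~\ref{lemma:upd_cond}--\ref{lemma:updating-cg}: each edge of each $\fskel{C}$ leaves $\act{\cdot}{C}$ at most once, their total number is $O(n)$, and each visited edge costs a segment query of $O(\log n)$ in $\FVD{C_i}$, giving $O(n\log n)$. (iii) The new contribution is the number of history-graph levels visited, each requiring an $O(\log n)$ point location to find the correct same-owner child (or to locate a recomputed root). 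The decisive observation is that, while inserting $C_i$, a level is visited for each step $\ell<i$ at which the face holding the active-subtree root of $C_i$ was deleted; by Lemma~\ref{cor:one-conf} this face is unique, so these visits are in bijection with the conflicts of $C_i$ deleted before step $i$. Summed over all insertions, the total number of visited levels therefore equals $\sum_{\ell=1}^{k}R_\ell$, whose expectation is $O(k\log k)$ by Lemma~\ref{lemma:conf_size} together with $\sum_{\ell}D_\ell\le k$. Hence part (iii) contributes $O(k\log n\log k)$, and the three parts sum to the claimed $O(n\log n+k\log n\log k)$.

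\emph{Space.} The number of nodes of the history graph equals the total number of ranges created, which is $O(n)$ in expectation by Lemma~\ref{lemma:updating-diagram}. For the arcs, each newly created range lies inside a single new region $\hreg{\InpSet_i}{C_i}$ and borders only $O(1)$ clusters of $\InpSet_i$ by Observation~\ref{obs:vb-face-def}; it therefore intersects only $O(1)$ ranges of $\HVD{\InpSet_{i-1}}$ and receives $O(1)$ incoming arcs. Consequently the number of arcs is proportional to the number of ranges, i.e.\ $O(n)$, and the expected $O(n)$ space bound follows.

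\emph{Main obstacle.} I expect the crux to be part (iii): establishing the exact correspondence between the levels descended in the history graph and the conflicts deleted in the conflict-graph analysis, so that the already-proved bound $\E\!\left[\sum_{\ell}R_\ell\right]=O(k\log k)$ can be reused verbatim. The subtlety is that a level must be charged even when the owner is unchanged (a same-owner descent), which is precisely a conflict whose range, but not its root, changed, and that the final level of a vanishing conflict (counted in $D_\ell$) must also be accounted for. A secondary point requiring care is that the per-level child search and the in-degree bound behind the space argument both genuinely reduce to Observation~\ref{obs:vb-face-def}, since the history-graph out-degree need not be constant and hence the framework's generic storage bound does not apply directly.
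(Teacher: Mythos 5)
Your time analysis follows essentially the same route as the paper. The paper also splits the cost into diagram updates (Lemma~\ref{lemma:updating-diagram}), edges dropped from active subtrees ($O(n)$ in total, $O(\log n)$ each), and the number of history-graph levels visited; for the last part it bounds directly, by backwards analysis, the expected number $K_i$ of clusters among $C_1,\dots,C_{i-1}$ that change the root of $\act{\cdot}{C_i}$ during the descent, obtaining $O(\log i)$ per insertion and $O(k\log k)$ overall. Your bijection between visited levels and conflicts deleted in the conflict-graph run, followed by an appeal to Lemma~\ref{lemma:conf_size}, is an equivalent packaging of the same backwards-analysis argument, and your remark that same-owner descents must also be charged is a point the paper itself glosses over (its $K_i$ nominally counts only root changes). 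So the time bound is fine.

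The space argument contains a step that does not hold. You claim that a new range intersects only $O(1)$ ranges of $\HVD{\InpSet_{i-1}}$ because, by Observation~\ref{obs:vb-face-def}, it borders $O(1)$ clusters. Observation~\ref{obs:vb-face-def} concerns adjacency of regions \emph{within one diagram}; it says nothing about how many faces of the \emph{previous} diagram a new face overlaps. A single fan face of $\hreg{\InpSet_i}{C_i}$ stretches from its owner point $c$ out to one boundary edge, and on the way it can sweep across the former territory of arbitrarily many deleted ranges belonging to several swallowed clusters, so its in-degree in $\Hist{\InpSet_i}$ need not be constant (also, new ranges of type (2) do not lie inside $\hreg{\InpSet_i}{C_i}$ at all). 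Hence the deduction ``arcs $=O(\text{nodes})$'' is unsupported as written. The paper sidesteps this by charging the storage only to the number of ranges ever created, which is expected $O(n)$ by Lemma~\ref{lemma:updating-diagram}; to justify the arc count one needs either an aggregate bound on the number of intersecting (deleted, created) pairs per step or an implementation that stores only the constant amount of adjacency information actually needed for the descent, neither of which follows from Observation~\ref{obs:vb-face-def}. Everything else in your proposal matches the paper's proof.
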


\begin{proof}
By Lemma~\ref{lemma:updating-diagram}, the total time required for updating the diagram during all steps of the algorithm 
is $O(n\log n)$.
Lemma~\ref{lemma:updating-diagram} also implies that the expected total number of ranges created during the algorithm is
 $O(n)$,  which
yields the bound on the expected storage requirement in the case of the
history graph.

Updating the history graph during step $i$ takes time
$O(\log{n}(N_i + K_i))$, where $N_i$ is the number 
of edges of $\fskel{C_i}$ that do not belong to the
active subtree $\act{F_i}{C_i}$, and thus, they  are
eliminated by the breath-first search.
$K_i$ is the number of clusters in the sequence $\{C_1,\dots, C_{i-1}\}$
  that change the root of the active subtree 
  as we move in the history graph level by level.
 By the backwards analysis, the expectation of $K_i$ is $O(\log{i})$.
 Summing over all $k$ steps gives us $O(k\log{k})$. The total expected
 running time  of the algorithm 
using the history graph is thus $O(n\log{n} + \nclus\log{n}\log{\nclus})$. \qed
\end{proof}

\section{Computing $\HVD{F}$ for arbitrary clusters of points}
\label{sec:cr}
In this section we drop the assumption that clusters  in the input family $F$
are pairwise non-crossing. 
This raises a major difficulty that 
Hausdorff bisectors may consist
of more than one polygonal curve and Voronoi regions may be
disconnected.
The definition of conflict 
from  Section~\ref{sec:ins-ncr}
no longer guarantees a correct diagram.
We thus need a new conflict definition. 
For a region $r \subset \mathbb{R}^2$, its boundary and its closure are denoted, respectively, $\partial r$ and $\cl{r}$. 

Let $C_1,\dots, C_k$ be a random permutation of clusters in $F$. 
We incrementally compute  $\HVD{F_i}$, $i = 1,\ldots,k$, where $F_i = \{C_1,C_2, \dots, C_i\}$. 
 At each step $i$, 
cluster $C_i$ is inserted in $\HVD{F_{i-1}}$. 
 We maintain the conflict graph (see Section~\ref{sec:ric}) between
 the ranges of $\HVD{F_{i}}$ and the clusters in $F \setminus
 F_{i}$. 
Like in Section~\ref{sec:non-cr}, ranges correspond to faces of 
$\HVD{F_{i}}$ as partitioned by the visibility decomposition.

When inserting cluster $C_i$, we compute $\hreg{F_i}{C_i}$, where $F_i = F_{i-1}\cup \{C_i\}$, 
using the information provided by the conflicts of 
$C_i$ with the ranges of $\HVD{F_{i-1}}$. 
From these conflicts we must be able to find  
at least one point in each face of $\hreg{F_i}{C_i}$.
For this purpose it is sufficient (by Lemma~\ref{prop:con-comp}) that
at every step $i$ of the algorithm,
we maintain
information on the $Q$-mixed vertices of $\HVD{F_i \cup \{Q\}}$, for every 
$Q \in F\setminus F_i$.
However, this is not sufficient to apply the Clarkson-Shor technique, since  
we need the ability to determine  new conflicts from
the conflicts of ranges that get deleted.
Due to this requirement, it is essential  that 
$Q$ is in conflict  not only  with
the ranges  of $\HVD{F_i}$  that contain  $Q$-mixed
vertices in $\HVD{F_i \cup \{Q\}}$, 
as Lemma~\ref{prop:con-comp} suggests, but also 
with all the ranges
that intersect the boundary of $\hreg{F_i \cup \{Q\}}{Q}$.

Let $\f$ be a range of $\HVD{F_{i}}$,  $\f \subset \hreg{F_{i}}{p}$,
where $p\in P$ and $P\in F_i$. Let $Q$ be a cluster in   $F \setminus F_{i}$.   
We define a conflict between  \mbox{$f$ and $Q$} as follows.

\begin{definition}[Conflict for arbitrary clusters]
  \label{def:confl-arb} 
Range $\f$, $\f \subset \hreg{F_{i}}{p}$, is \emph{in conflict} with cluster $Q$,
if $\f$  intersects  the boundary of $\hreg{F_{i}\cup\{Q\}}{Q}$. 
The \emph{vertex list} $V(\f, Q)$ of this  conflict 
 is 
the list of all vertices  and all endpoints of 
$\bh{P,Q} \cap \cl{\f}$, ordered  in clockwise angular order
around $p$.
\end{definition}
 
The following observation is essential to efficiently update the
conflict graph.

\begin{observation}
\label{obs:point-bisector}
 $\bh{P,Q} \cap \cl{\f} = \bh{p,Q} \cap \cl{\f}$. 
 The order of vertices in the list $V(\f, Q)$ coincides with the natural order
 of vertices along $\bh{p,Q}$.
\end{observation}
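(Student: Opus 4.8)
The plan is to prove the two assertions separately: the first equality follows by localizing the cluster distance $\df{\cdot,P}$ to the single point $p$ on $\cl{\f}$, and the second follows from an angular-monotonicity property of the point-to-cluster bisector $\bh{p,Q}$. For the first equality I would start from $\f \subset \hreg{F_i}{p} \subseteq \freg{P}{p}$, which holds because $\hreg{F_i}{p} = \hreg{F_i}{P}\cap\freg{P}{p}$. Passing to closures gives $\cl{\f}\subseteq\cl{\freg{P}{p}}$, and on this set $p$ realizes the farthest distance to $P$, i.e.\ $\df{y,P}=d(y,p)=\df{y,p}$ for every $y\in\cl{\f}$ (the equality persists on the closure by continuity). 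Hence, for $y\in\cl{\f}$, the condition $\df{y,P}=\df{y,Q}$ defining $\bh{P,Q}$ is equivalent to $\df{y,p}=\df{y,Q}$ defining $\bh{p,Q}$, and intersecting both bisectors with $\cl{\f}$ yields the claimed identity.

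The core of the proof is to show that $\bh{p,Q}$ is \emph{angularly monotone} around $p$, i.e.\ every ray emanating from $p$ meets $\bh{p,Q}$ in at most one point. I would fix a unit direction $u$, parametrize the ray as $y(t)=p+t\,u$ with $t\ge0$, and study $g(t)=d(y(t),p)-\df{y(t),Q}=t-\df{y(t),Q}$. Since $\df{\cdot,Q}=\max_{q\in Q}d(\cdot,q)$ is convex and has a unit-length gradient $(y-q^\ast)/|y-q^\ast|$ wherever the farthest point $q^\ast$ is unique, its directional derivative along $u$ is at most $1$; therefore $g'(t)\ge0$ and $g$ is non-decreasing. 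As $g(0)=-\df{p,Q}<0$ (because $p\notin Q$), the function $g$ changes sign at most once. To exclude a zero on a whole interval I would argue that $g(t)=0$ forces $g'(t)>0$: a flat zero would require $u$ to point directly away from $q^\ast$, which would instead give $g(t)=-|p-q^\ast|<0$, a contradiction. Thus each ray from $p$ carries a unique intersection with $\bh{p,Q}$, and notably this needs no general-position assumption.

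It then remains to combine the two parts. Angular monotonicity means that the natural order of points along $\bh{p,Q}$ coincides with the angular order of the rays from $p$ through them; by the first equality the vertices and endpoints recorded in $V(\f,Q)$ are points of $\bh{p,Q}$, so ordering them clockwise around $p$ as in Definition~\ref{def:confl-arb} reproduces their order along $\bh{p,Q}$. I expect the angular-monotonicity argument to be the main obstacle: making the directional-derivative bound on the convex, non-smooth function $\df{\cdot,Q}$ precise at points where several farthest points of $Q$ are simultaneously active, and cleanly ruling out a degenerate interval of equidistant points. Everything else is a direct consequence of the definitions.
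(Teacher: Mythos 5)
Your proof is correct and follows the route the paper intends: the first identity comes from $\f\subset\hreg{F_i}{p}\subseteq\freg{P}{p}$, where the farthest distance to $P$ reduces to the distance to $p$, and the ordering claim comes from the fact that $\bh{p,Q}$ is a single chain met at most once by every ray from $p$. The paper leaves the observation unproved and cites~\cite{PL04} for $\bh{p,Q}$ being one convex chain; your directional-derivative (equivalently, $1$-Lipschitz) argument is a clean self-contained substitute for that citation, and your exclusion of a degenerate equidistant interval is sound since it would force the farthest point $q^\ast$ to coincide with $p$.
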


Note that
$\bh{p,Q}$ is a single
 convex chain~\cite{PL04}, unlike
$\bh{P,Q}$, and this is why it is much  simpler to process the former
chain rather  than the latter.
Figure~\ref{fig:branches} shows bisector $\bh{P,Q}$ and a range $f$ intersected by it. 
The boundary $\partial\f$ of this range consists of four parts: the top side 
is a portion of a pure edge of $\HVD{F_i}$; the
bottom chain, shown in bold, is  a portion of $\fskel{P}$; 
the two sides 
are edges of the visibility
decomposition of $\hreg{F_{i}}{p}$.

 \begin{figure}
\begin{minipage}{0.49\linewidth}
\centering
\includegraphics[page=1]{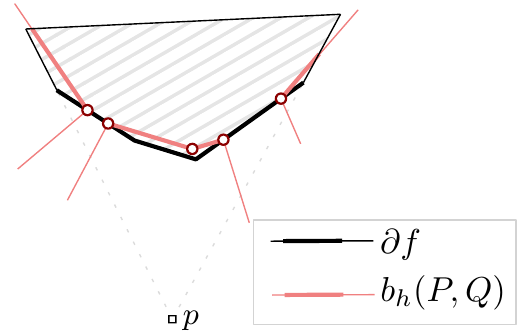}
\caption{A range $\f \subset \hreg{F_{i}}{p}$ (shaded);
Bisector $\bh{P,Q}$, 
$Q \in F \setminus F_{i}$;  
vertices of $\bh{P,Q}$ (unfilled circle marks).}
\label{fig:branches}
\end{minipage}
\hfill
\begin{minipage}{0.49\linewidth}
\includegraphics{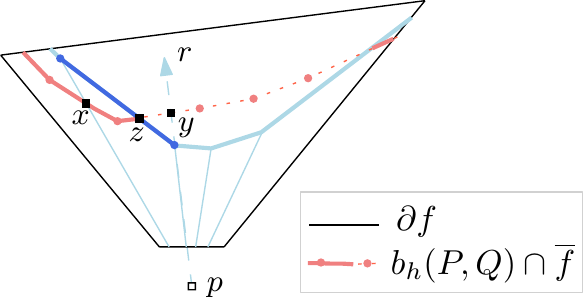}
\caption{An old range $\f \subset \hreg{F_{i-1}}{p}$;  
New ranges of $\hreg{F_i}{p}$ derived from $\f$ (bounded by light 
solid lines). 
$\bh{P,Q}  \cap \overline{\f}$
where $Q$ is  a cluster in conflict with $\f$.  
} 
\label{fig:new-ranges}
\end{minipage}
\end{figure}

We proceed with a procedure to insert a cluser and to update the conflict graph after this insertion (see Section~\ref{sec:ins-conf-cr}), 
and afterwards we analyze 
the complexity of this procedure and of the whole RIC algorithm (see Section~\ref{sec:compl-analysis-cr}). 

\subsection{Insertion of a cluster}
\label{sec:ins-conf-cr}

\paragraph{Insert $C_i$ into $\HVD{F_{i-1}}$.} 
We compute all faces of $\hreg{F_i}{C_i}$  
 by tracing their boundary,  starting at the vertices in the vertex lists of the conflicts of $C_i$.
In particular, while there are unprocessed vertices in these vertex
lists, pick one such vertex $v$ and trace the connected component of
$\partial \hreg{F_i}{C_i}$ adjacent to~$v$. All vertices on that
connected component are in
the vertex lists of conflicts of $C_i$; mark them as processed. 

The insertion of $\hreg{F_i}{C_i}$
results in deleting some
ranges, which we call \emph{old ranges}, and inserting some other ones, which we call  
 \emph{new ranges}.
We have two types of new ranges:
\emph{type (1)}: the ranges in  $\hreg{F_i}{C_i}$; 
        and  \emph{type (2):} 
the new ranges in the Hausdorff regions of clusters in $F_{i-1}$ derived from the old ranges, as a result of the  
insertion of $C_i$. The type (2) ranges are derived from the deleted ranges
of clusters in $F_{i-1}$ (see Figure~\ref{fig:new-ranges}).

\paragraph{Update the conflict graph.}
For each cluster $Q \in F\setminus F_i$  
in conflict with
at least one deleted range, 
    compute the conflicts 
of $Q$ with the new ranges. We compute the conflicts with ranges of type \mbox{(1) and (2)} separately as follows.

\paragraph{Ranges of type (1).}

Consider the ranges in  
$\hreg{F_i}{C_i}$. 
We follow the bisector $\bh{Q, C_i}$ within
$\hreg{F_i}{C_i}$, while computing this bisector on the fly. 
For each face $f \subset \hreg{F_i}{C_i}$ that is encountered as we walk on
$\bh{Q, C_i}$,
we also discover vertices in $V(f,Q)$.
Since $\bh{Q, C_i}\cap f$  may consist of
several components (see e.g. bold lines in Figure~\ref{fig:branches}),  
$f$ can be encountered a number of times;
each time, we augment  $V(f,Q)$
independently. 
We do 
this by inserting in $V(f,Q)$
the vertices on the branch of 
$\bh{Q, C_i}\cap f$ that had just been encountered, including its
endpoints  on $\partial f$.
The position in $V(f,Q)$ of the insertion can be determined 
by binary search.

\paragraph{Ranges of type (2).}

Let $f \subset \hreg{F_{i-1}}{p}$, $p \in P$, be a deleted range that
had been 
in conflict with $Q$.
Recall that the vertex list $V(f,Q)$ corresponds to $\bh{P,Q} \cap \cl f$, which coincides with  $\bh{p,Q} \cap \cl f$ (see Defintion~\ref{def:confl-arb} 
and  Observation~\ref{obs:point-bisector}). 
In the following we always operate with the latter one.
 For each new range $f'$ such that  $f' \subset f$ 
 and $f'$ is in conflict with $Q$, we need to compute  list $V(f',Q)$. 
 Observe, that the union of $\bh{p,Q} \cap \cl{f'}$, for all such ranges $f'$,
is $(\bh{p,Q} \cap \cl{f}) \setminus \hreg{F_i}{C_i}$.\footnote{Note that some of the new ranges of type (2) in fact consist of portions of two or more distinct old ranges. However, here we treat each such range as a group of ranges, as subdivided by old ranges. After the vertex lists of this group are found, it is easy to merge these ranges into a single range, and their vertex lists into a single list.}
We now introduce some notation. 
 
We call the maximal contiguous portions of $\bh{p,Q} \cap \cl{f}$, 
outside $\hreg{F_i}{C_i}$, the
\emph{active parts} of $\bh{p,Q} \cap \cl{f}$. 
The \emph{non-active parts} of $\bh{p,Q} \cap \cl{f}$  are its maximal
contiguous portions inside $\hreg{F_i}{C_i}$.
Figure~\ref{fig:new-ranges} 
shows the active and the non-active  parts of $\bh{p,Q}  \cap \overline{\f}$ 
by (red) bold and 
dotted lines respectively.
Note that one active (resp., non-active) part may consist of multiple
polygonal curves.  
A point incident to one active and one non-active part is called a
\emph{transition point},   
see e.g. point $z$ in Figure~\ref{fig:new-ranges}. 

Transition points 
lie in $\bh{Q,C_i} \cap \partial \hreg{F_i}{C_i}$; they are
used as starting points to compute
conflicts for ranges of type~(1).
Our task is to determine all  active parts of $\bh{p,Q} \cap
\cl{f}$, their incident transition points, and to create conflicts induced by these active parts.

We process  active and non-active parts of $\bh{p,Q} \cap \cl f$ sequentially: 
\begin{itemize}
\item  
For a non-active part,
we trace it in $\hreg{F_i}{C_i}$ 
in order
to determine the 
transition point where the next active part begins.
\item
For an active part, 
we process sequentially the new ranges of $\HVD{F_i}$ that are intersected by it. 
For each such range $f' \subset f$, we 
compute $V(f',Q)$, given the point $x$ where the active part enters
$f'$.
In particular, we find 
the point $z$ where it exits $f'$; after that the 
list $V(f',Q)$ 
can be easily derived from the portion of  
$\bh{p,Q}$ between $x$ and $z$. A procedure 
to find point $z$ is detailed in Lemma~\ref{lemma:xy}. 
Point $z$ is the endpoint of the active part that
we were processing, thus, $z$ is a transition point. 
 
\end{itemize}

\begin{lemma}
\label{lemma:xy}
Point $z$ can be determined in $O(\log{n})$ time. 
\end{lemma}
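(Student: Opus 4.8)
The plan is to locate $z$ as the first point at which the convex chain $\bh{p,Q}$, traced from the entry point $x$, crosses the boundary $\partial f'$. Two facts make this tractable. First, $\bh{p,Q}$ is a single convex chain (Observation~\ref{obs:point-bisector}), and its part inside $\cl{f}$ is already stored, sorted in its natural order, in the vertex list $V(f,Q)$ of the deleted conflict; hence $x$ is a known position on this sorted chain, and I only need to find the next position $z$ at which it leaves $f'$. Second, since $f'$ is a convex face of the visibility decomposition, $\partial f'$ decomposes into $O(1)$ kinds of pieces: the visibility segments emanating from $p$, a portion of the old boundary of $f$ (a piece of $\fskel{P}$ or of a pure edge), and a portion of $\partial\hreg{F_i}{C_i}$, which inside $\cl{f}$ coincides with a portion of the convex chain $\bh{p,C_i}$ (by the analogue of Observation~\ref{obs:point-bisector} for $C_i$). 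The crossings with the first two kinds are cheap: the crossing with the old boundary of $f$ is one of the endpoints already recorded in $V(f,Q)$, while the crossings with the visibility segments bounding $f'$ are found by binary searching their supporting lines against the sorted chain $\bh{p,Q}$, each in $O(\log n)$ time. Thus the crux is the remaining case, where $z$ lies on $\bh{p,C_i}$, i.e.\ where $\bh{p,Q}$ first crosses into the new region $\hreg{F_i}{C_i}$, ending the active part.

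First I would characterize this crossing. A point $z\in\bh{p,Q}\cap\bh{p,C_i}$ satisfies $d(z,p)=\df{z,Q}=\df{z,C_i}$, so it is equidistant to $p$, $Q$ and $C_i$; in particular $z$ lies on $\bh{Q,C_i}$. Along the chain, a point $y$ belongs to the active part exactly while $\df{y,C_i}>d(y,p)$ (here $d(y,p)=\df{y,P}$ since $f'\subset\freg{P}{p}$, and $d(y,p)=\df{y,Q}$ on $\bh{p,Q}$), and it leaves the active part once $\df{y,C_i}<d(y,p)$. Hence $z$ is the first zero of the function $\df{\cdot,C_i}-d(\cdot,p)$ encountered after $x$, which reduces the geometric question to a one-dimensional search, along the known convex chain $\bh{p,Q}$, for the first sign change of a farthest-distance difference.

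The main obstacle is to carry out this search in $O(\log n)$ rather than $O(\log^2 n)$. A direct binary search over the vertices of $\bh{p,Q}$ is too expensive, because evaluating $\df{y,C_i}$ at a probe $y$ requires a point location in $\FVD{C_i}$, costing $O(\log n)$ per probe. To remove the extra logarithmic factor I would drive the search from the centroid decomposition of $\FVD{C_i}$, in the spirit of the segment query of Definition~\ref{def:segm-query}: descending the centroid decomposition localizes $z$ to a single face of $\FVD{C_i}$, inside which $\df{\cdot,C_i}=d(\cdot,c)$ for one fixed $c\in C_i$, so that the defining equation $d(z,p)=d(z,c)$ becomes a single perpendicular bisector; simultaneously restricting the sorted chain $\bh{p,Q}$ to the current cell leaves, at the bottom of the recursion, a single chain edge to be intersected with a single line in $O(1)$ time, and the entire descent costs $O(\log n)$. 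The one point needing care is that $\bh{p,Q}$ may meet $\partial\hreg{F_i}{C_i}$ several times inside $\cl{f}$, so the search must return the \emph{first} crossing after $x$, not an arbitrary one; this is handled by restricting the search to the subchain of $\bh{p,Q}$ beginning at $x$ and to the piece of $\bh{p,C_i}$ that bounds $f'$, on which the crossing is unique. Taking $z$ to be the nearest crossing along $\bh{p,Q}$ among the three candidate boundary pieces yields the exit point, and $V(f',Q)$ is then the sublist of $V(f,Q)$ between $x$ and $z$.
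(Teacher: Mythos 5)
Your case analysis of $\partial f'$ is sound and matches the structure of the paper's argument: the exit through the bottom chain $\fskel{P}$ is an endpoint already stored in $V(f,Q)$, the exit through the rightmost visibility edge of $f'$ is found by one binary search in the sorted list $V(f,Q)$, and the only hard case is an exit through the pure edge separating $f'$ from $\hreg{F_i}{C_i}$. The gap is in how you handle that hard case. You reduce it to finding the first sign change of $\df{\cdot,C_i}-d(\cdot,p)$ along the convex chain $\bh{p,Q}$, and you assert this can be done in $O(\log n)$ by descending the centroid decomposition of $\FVD{C_i}$ ``while restricting the chain to the current cell.'' That primitive is not the segment query of Definition~\ref{def:segm-query}: the segment query walks down the centroid decomposition along a single \emph{line segment}, spending $O(1)$ per level. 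For a polygonal chain with up to $\Theta(n)$ vertices, restricting the chain to the part relevant to the current subtree of the decomposition is itself a search on the chain; done naively this costs $O(\log n)$ per level, i.e.\ $O(\log^2 n)$ overall, and your sketch does not explain how to avoid it. Equivalently, a direct binary search over the vertices of $\bh{p,Q}$ needs an $O(\log n)$ point location in $\FVD{C_i}$ per probe, again $O(\log^2 n)$. So the claimed $O(\log n)$ bound for the crux step is asserted, not established.

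The paper sidesteps this by turning the query around. It first locates, by one binary search in $V(f,Q)$, the point $y$ where $\bh{p,Q}\cap\cl{f}$ meets the rightmost visibility ray of $f'$ (or the relevant rightmost endpoint $t$). If $y\in\partial f'$ the active part simply continues into the next range and $z=y$; if $t\in\fskel{P}$ then $z=t$. Otherwise it argues, via the visibility/convexity properties of the two bisectors, that the portion of $\bh{p,Q}$ between $x$ and $y$ crosses $\partial f'$ exactly once, on the \emph{straight} top edge $ab$ of $f'$, and that $ab$ is split by $z$ into a subsegment closer to $Q$ than to $p$ and one closer to $p$ than to $Q$. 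Hence $z$ is recovered by a single standard segment query for $ab$ (or $xb$) in $\FVD{Q}$ — a search along a fixed line segment, which is the already-available $O(\log n)$ primitive. If you want to salvage your route, the cleanest fix is exactly this switch: search along the straight edge of $\partial f'$ in $\FVD{Q}$ rather than along the chain $\bh{p,Q}$ in $\FVD{C_i}$; otherwise you must supply a genuine $O(\log n)$ ``first crossing of a convex chain with a farthest-region boundary'' query, which the paper never needs.
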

\begin{proof}

To find point $z$,  consider the rightmost 
ray $r$ originating at $p$ and passing through $\partial f'$.
If $\bh{p,Q} \cap \cl \f$ intersects $r$ (see Figure~\ref{fig:new-ranges}), let $y$ be the point of this intersection. 
Otherwise,
we let $t$ be the rightmost endpoint of  $\bh{p,Q} \cap \cl \f$ to the left of $r$. 
If  $t \in \fskel{P}$, set $z=t$, otherwise set $y=t$.

If $y \in \partial f'$, then we set $z=y$. In this case the active part of $\bh{p,Q} \cap \cl \f$ 
enters the next new range $f''$ at point $y$. 
In Figure~\ref{fig:new-ranges}, this case is illustrated by point $x$ that 
plays the role of $z=y$.  

If  $y$ lies outside $\cl{f'}$ ($y \not\in \cl{f'}$),  
we determine point $z$ as 
the unique point on $\bh{p,Q}$, such that $z$ is between $x$ and $y$,
 and    $z \in \partial f'$. See Figure~\ref{fig:new-ranges}. 

Suppose $y$ lies outside $\cl{f'}$. 
In particular, by construction of $y$, it must lie in  $\cl{\f}\setminus \cl{f'}$.
The portion of $\bh{P,Q}$ between $x$ and $y$ lies entirely in $\cl{f}$, $x \in \cl{f'}$, and $y \not\in \cl{f'}$. 
By the visibility properties of $\bh{P,Q}$ and $\bh{P,C_i}$,
the portion of  $\bh{P,Q}$ between $x$ and $y$ (excluding $x$) intersects $\partial f'$ exactly once, 
and the intersection point $z$ lies on the top side of $\partial f'$ (the portion of a pure edge of $\HVD{F_{i+1}}$ on $\partial f'$); 
see the dark blue line segment in Figure~\ref{fig:new-ranges} or the top edge of $\f$ in Figure~\ref{fig:branches}. 
Let $ab$ denote the top side of $\partial f'$, where $a$ precedes $b$ in the clockwise order around $p$.   
If $x \not\in ab$, 
the entire subsegment $az$ of $ab$ is closer 
 to $Q$ than to $p$, and the entire $zb$ is closer to $p$ than to $Q$. 
If $x \in ab$, the above property holds for $xz$ and $zb$.
Thus  $z$ can be determined by a segment query for $ab$ (resp.,  $xb$) in $\FVD{Q}$ (see Definition~\ref{def:segm-query} of the segment query). 
A segment query can be performed in $O(\log{n})$ time, see Section~\ref{sec:prelim}.

Points $y$ and $t$ can be found in $O(\log{n})$ time by a binary search in $V(f,Q)$. \qed
\end{proof}

The following lemma shows correctness of the algorithm; its time and space complexity is analyzed in Section~\ref{sec:compl-analysis-cr}.

\begin{lemma}
\label{lemma:cr-correctness}
The above 
algorithm 
correctly updates the Hausdorff Voronoi diagram and the conflict graph after insertion of $C_i$. 
\end{lemma}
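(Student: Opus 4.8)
The plan is to split correctness into two claims: (A) the traced region $\hreg{F_i}{C_i}$, and hence $\HVD{F_i}$, is computed correctly; and (B) the updated conflict graph records, for every $Q\in F\setminus F_i$, exactly its conflicts with the new ranges together with the correct vertex lists $V(\cdot,Q)$. The two claims interact only through the transition points, so I would prove (A) first, then the locality underlying (B), and finally the vertex-list bookkeeping.

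For (A), I would argue that the boundary tracing reaches every face of the (possibly disconnected) region $\hreg{F_i}{C_i}$. By Lemma~\ref{prop:con-comp} each face meets $\fskel{C_i}$ in a single connected component whose endpoints are $C_i$-mixed vertices, so every face has a non-empty boundary carrying at least one bisector vertex. Every such vertex lies on $\bh{P,C_i}\cap\cl{f}$ for the range $f$ of $\HVD{F_{i-1}}$ containing it, and that range is in conflict with $C_i$ by Definition~\ref{def:confl-arb}; hence the vertex is recorded in $V(f,C_i)$. Consequently the set of starting vertices supplied to the tracing meets every connected component of $\bd\hreg{F_i}{C_i}$, and since tracing a component visits all of its vertices (which are likewise recorded and marked processed, so no component is traced twice), all faces are found exactly once. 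The local boundary tracing itself is the standard region tracing of~\cite{PL04}.

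For (B), the essential point — and the one I expect to be the crux — is the Clarkson--Shor locality~\cite{Clarkson_rand_sampling_2,CMS93} adapted to the new conflict definition: every new conflict must involve a cluster already in conflict with a deleted range, so iterating over clusters in conflict with deleted ranges loses nothing. For a type~(2) range $f'\subset f$ with $f\subset\hreg{F_{i-1}}{P}$ deleted, suppose $(f',Q)$ is a conflict and pick $y\in f'\cap\bd\hreg{F_i\cup\{Q\}}{Q}$. Since $f'$ still has owner $p\in P$, the boundary of $Q$'s region through $f'$ lies on $\bh{P,Q}$, so $\df{y,Q}=\df{y,P}$; and $y\in f'\subset\hreg{F_{i-1}}{P}$ gives $\df{y,P}\le\df{y,C}$ for every $C\in F_{i-1}$. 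Hence $\df{y,Q}=\df{y,P}\le\df{y,C}$ for all $C\in F_{i-1}$, i.e. $y\in\bd\hreg{F_{i-1}\cup\{Q\}}{Q}\cap f$, so $(f,Q)$ was already a conflict. This monotonicity is the clean half of the argument. For the subdivision I would then invoke Observation~\ref{obs:point-bisector} to work with the single convex chain $\bh{p,Q}$ rather than $\bh{P,Q}$, verify that its decomposition into active and non-active parts separates exactly the portions surviving as $\bd\hreg{F_i\cup\{Q\}}{Q}$ from those swallowed by $\hreg{F_i}{C_i}$, and apply Lemma~\ref{lemma:xy} to locate the exit point $z$ of each active part from each new range $f'$; the list $V(f',Q)$ is read off the subchain of $\bh{p,Q}$ between consecutive entry/exit points, with the order-preservation in Observation~\ref{obs:point-bisector} validating the binary-search insertions.

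For the type~(1) ranges inside $\hreg{F_i}{C_i}$ I would use the transition points produced during type~(2) processing as the start of the walk along $\bh{Q,C_i}$: a maximal component of $\bh{Q,C_i}\cap\hreg{F_i}{C_i}$ reaches $\bd\hreg{F_i}{C_i}$ at points equidistant to $C_i$, $Q$ and the owner $c$ of the neighbouring region, which are exactly the transition points incident to an active part of $\bh{c,Q}\cap\cl{g}$ for a deleted range $g$ in conflict with $Q$. Thus walking $\bh{Q,C_i}$ from every transition point visits every component, and each crossed face $f'\subset\hreg{F_i}{C_i}$ receives the correct branch of $V(f',Q)$, inserted independently per crossing. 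I expect the main obstacle to lie precisely here: while the type~(2) monotonicity is straightforward, for type~(1) I must show that every connected component of $\bh{Q,C_i}\cap\hreg{F_i}{C_i}$ — including unbounded ones and ones meeting the region only at its far boundary — is anchored by a transition point that the type~(2) pass actually discovers, and that the type-(1) and type-(2) passes together exhaust $\bd\hreg{F_i\cup\{Q\}}{Q}$ without double counting. Handling the unbounded components and this interplay between the two passes is where the argument needs the most care.
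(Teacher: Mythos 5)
Your decomposition into (A) correctness of the traced region and (B) correctness of the conflict-graph update mirrors the paper's proof, and your argument for (A) is exactly the paper's: by Lemma~\ref{prop:con-comp} every face of $\hreg{F_i}{C_i}$ carries a $C_i$-mixed vertex, every such vertex appears in a vertex list of a conflict of $C_i$, hence every face is reached by the tracing. Your type-(2) monotonicity observation (a conflict of a new range $f'\subset f$ with $Q$ forces a conflict of the old range $f$ with $Q$) is a correct locality statement that the paper leaves implicit, and your use of Observation~\ref{obs:point-bisector} and Lemma~\ref{lemma:xy} for the active/non-active subdivision matches the paper.

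The gap is precisely the step you flag at the end and do not close: why every connected component of $\bh{Q,C_i}\cap\hreg{F_i}{C_i}$ is anchored at a transition point produced by the type-(2) pass. This is the only non-routine part of (B); without it, the claim that the two passes jointly exhaust $\bd\,\hreg{F_i\cup\{Q\}}{Q}$ is unproven. A priori such a component could be a closed curve, or a curve unbounded in both directions, lying entirely inside $\hreg{F_i}{C_i}$ --- indeed, entirely inside old ranges that were \emph{not} in conflict with $Q$, because they lay in the interior of $\hreg{F_{i-1}\cup\{Q\}}{Q}$ --- and such a component would never be reached from any transition point. The paper dispatches this with one structural observation: if such a component existed, then $Q$ would have a face in $\HVD{F_i\cup\{Q\}}$ bounded solely by edges induced by $C_i$ and $Q$, i.e.\ a face of $\hreg{F_i\cup\{Q\}}{Q}$ lying entirely inside the region of the single cluster $C_i$, which is impossible in a Hausdorff Voronoi diagram. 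Supplying this impossibility argument (or an equivalent one) is what your proof still needs; everything else in your outline is consistent with the paper's reasoning.
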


\begin{proof}
Correctness of updating the Hausdorff Voronoi diagram follows from Lemma~\ref{prop:con-comp}. Indeed, each face of $\hreg{F_i}{C_i}$
is incident to at least one $C_i$-mixed vertex. Since all $C_i$-mixed vertices are in the vertex lists of the conflicts of $C_i$, 
all the faces of $\hreg{F_i}{C_i}$ are discovered. 

While updating the conflict graph, all the conflicts between new ranges and clusters in $F \setminus F_i$ are computed. 
Indeed, for each $Q \in F\setminus F_i$, 
while computing the ranges of type (2), the algorithm discovers all the edges of $\HVD{F_i \cup{Q}}$ outside $\hreg{F_i}{C_i}$.
Using the transition points 
found while computing ranges of type (2) as starting points, 
the procedure to compute ranges of type (1) determines all the edges of $\HVD{F_i \cup{Q}}$ inside $\hreg{F_i}{C_i}$.  
If it was not the case, then $Q$ would have a face in $\HVD{F_i \cup{Q}}$ 
bounded solely  by the edges induced by $C_i$ and $Q$, i.e., a face that lies inside the region of another cluster, which is not possible in the Hausdorff Voronoi diagram. \qed
\end{proof}

\subsection{Complexity analysis of the algorithm for arbitrary clusters}
\label{sec:compl-analysis-cr}

In this section we analyze the time and space complexity of the algorithm in Section~\ref{sec:ins-conf-cr}.
The conflicts are defined in a non-standard way, i.e., they have vertex lists whose complexity need not be constant. We use the Clarkson-Shor technique to bound the number of ranges created throughout  the algorithm (see Theorem 3), but we
cannot rely on the Clarkson-Shor technique to bound the total number of conflicts nor the time complexity of the algorithm,
because the update condition of the technique (see Section 2) is not satisfied. 
Our analysis can be seen as extending the Clarkson-Shor analysis to such a non-standard setting.

In the next two lemmas, we analyze respectively the time complexity of updating the conflict graph after insertion of  cluster $C_i$, 
and the total space required for the conflict graph at any step of the algorithm.  The expectation of the total number of conflicts created 
during the algorithm is bounded in Theorem~\ref{th:cr}  that states the overall result.

\begin{lemma}
\label{lemma:upd}
Updating the conflict graph after  
insertion of cluster  $C_i$ can be done in time 
$O((A(C_i) + L(C_i)+ V(C_i))\log{n})$, where 
$A(C_i)$ is the number of conflicts created and deleted (i.e., new and old conflicts), 
$L(C_i)$ is the total number of mixed vertices 
in  the  vertex lists of 
old conflicts that do not appear in the vertex lists of new conflicts,  
and $V(C_i)$ is the total 
size of the vertex lists of all  conflicts of all ranges of 
$\hreg{F_i}{C_i}$. 
\end{lemma}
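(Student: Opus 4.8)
The plan is to walk through the three phases of the insertion procedure of Section~\ref{sec:ins-conf-cr} and charge every elementary step, each of cost $O(\log n)$, to one of $A(C_i)$, $L(C_i)$, or $V(C_i)$. The crucial preliminary is the representation of the vertex lists: I would store each list $V(\f,Q)$ as a balanced search tree ordered as in Observation~\ref{obs:point-bisector} (angularly around the owner, equivalently along $\bh{p,Q}$), supporting split and join at a queried position in $O(\log n)$ time. This lets a \emph{contiguous} block of vertices be cut out of one list and spliced into another in $O(\log n)$ time, independently of the block length; as explained below, this is exactly what is needed to avoid paying per vertex for the vertices that survive the insertion.

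First, iterating over the old conflicts (those of the deleted ranges) and removing them from the graph costs $O(\log n)$ each, hence $O(A(C_i)\log n)$ in total, since the number of old conflicts is at most $A(C_i)$. Next, for the conflicts with ranges of type (1), I would follow $\bh{Q,C_i}$ inside $\hreg{F_i}{C_i}$, computing it on the fly with one segment query ($O(\log n)$) per produced vertex. These produced vertices, together with the boundary-crossing endpoints at which the walk passes from one face to the next, are precisely the entries of the lists $V(\f,Q)$ with $\f\subset\hreg{F_i}{C_i}$, whose total size is $V(C_i)$ by definition; each face crossing triggers one $O(\log n)$ binary-search insertion. Hence this phase costs $O(V(C_i)\log n)$.

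The conflicts with ranges of type (2) are the delicate part. For each surviving old conflict $(\f,Q)$, I would traverse the list $V(\f,Q)$ as an alternating sequence of non-active and active parts (Figure~\ref{fig:new-ranges}). A non-active part lies inside $\hreg{F_i}{C_i}$: its vertices disappear from the diagram and are exactly those counted by $L(C_i)$, so tracing across it vertex by vertex, with one $O(\log n)$ point location per vertex to detect the exit, costs $O(L(C_i)\log n)$ over the whole step. An active part is cut by the new type-(2) ranges it meets into contiguous blocks; for each such range $\f'$ I would locate the exit point $z$ in $O(\log n)$ by Lemma~\ref{lemma:xy} and then move the block of $V(\f,Q)$ between the entry point and $z$ into $V(\f',Q)$ by a split/join in $O(\log n)$. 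Because $\bh{p,Q}$ is a single convex chain and $\f'$ is convex, $\bh{p,Q}\cap\cl{\f'}$ is connected, so each new type-(2) conflict corresponds to exactly one such block (modulo the $O(1)$-per-conflict merging of blocks coming from several old ranges, see the footnote in Section~\ref{sec:ins-conf-cr}); the number of these $O(\log n)$ operations is thus at most the number of new conflicts, i.e. $O(A(C_i))$. Summing the four contributions gives the claimed bound.

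The main obstacle, and the reason the split/join representation is indispensable, is that the vertices on active parts \emph{survive} the insertion: they belong neither to $L(C_i)$ (they do not disappear) nor to $V(C_i)$ (they do not lie in type-(1) lists), so they cannot be charged if touched individually. The argument only closes because each such surviving block is moved wholesale in $O(\log n)$ time and charged to a single new conflict; this in turn relies on the convexity of $\bh{p,Q}$ (so that each conflict is one contiguous block) and on Lemma~\ref{lemma:xy} to locate the block boundaries, all used together.
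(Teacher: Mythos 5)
Your charging scheme (type-(1) work to $V(C_i)$, non-active parts to $L(C_i)$, everything else to $A(C_i)$) matches the paper's, but your mechanism for the central difficulty --- the vertices on active parts that survive the insertion and are counted by none of the three quantities --- is genuinely different. You keep one list per conflict and move surviving blocks wholesale via split/join on a balanced tree. The paper instead never moves these vertices at all: it stores a single consolidated list of $\bh{p,Q} \cap \hreg{F_i}{p}$ with the owner $p$, and each conflict $(\f,Q)$ records only the leftmost and rightmost endpoints of $\bh{p,Q}\cap\cl{\f}$ within that list. Since every type-(2) range derived from $\f$ keeps the same owner $p$, a new conflict is created in $O(1)$ time once its two endpoints are known, and the only list surgery is deletion of non-active parts (charged to $L(C_i)$) and binary-search insertion of new endpoints (charged to $A(C_i)$ and $V(C_i)$). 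The paper's trick buys simplicity --- no split/join machinery and no need to count how many blocks land in each new list; yours buys locality of the data (each conflict owns its list), at the price of having to bound the number of block transfers.

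That price is where your argument has a real gap. You justify ``one block per new type-(2) conflict'' by claiming that $\bh{p,Q}\cap\cl{\f'}$ is connected because $\bh{p,Q}$ is a convex chain and $\f'$ is convex. This is false precisely in the crossing case that Section~\ref{sec:cr} is about: the bisector can leave $\cl{\f'}$ through the portion of $\partial\f'$ lying on $\fskel{P}$ and re-enter, so $\bh{p,Q}\cap\cl{\f'}$ can have up to $\Theta(\textrm{number of crossings of } P \textrm{ and } Q)$ components (this is exactly why the proof of Lemma~\ref{lemma:lists} bounds only the \emph{non}-mixed-vertex endpoints by two per conflict, and why Figure~\ref{fig:branches} shows $\bh{P,Q}\cap\f$ with several branches). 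Your accounting can be repaired, but not by convexity: the correct observation is that $\bh{p,Q}$ is angularly monotone around $p$ and the ranges of $\hreg{F_i}{p}$ are angular sectors, so all components of $\bh{p,Q}\cap\cl{\f'}$ occupy a contiguous stretch of the list, and the number of split/join operations is bounded by the number of (active part, new range) incidences; the extra incidences beyond one per new conflict must then be charged to transition points, which lie on $\partial\hreg{F_i}{C_i}$ and are therefore already counted by $V(C_i)$. As written, neither the monotonicity argument nor the charging of repeated visits to the same new range appears in your proof, so the bound $O(A(C_i))$ on the number of block moves does not follow from what you state.
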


\begin{proof}
First consider the time complexity of creating all the conflicts of the new ranges derived from $\hreg{F_i}{C_i}$ (i.e., all the conflicts of the new ranges of type (1)).  
This is the total time 
spent for this task for all clusters $Q \in F\setminus F_i$ that are 
in conflict with such new ranges. 

Tracing $\bh{Q,C_i}$ inside $\hreg{F_i}{C_i}$ requires time proportional to the complexity of
$\bh{Q,C_i} \cap \hreg{F_i}{C_i}$, times $O(\log{n})$. 
The complexity of $\bh{Q,C_i} \cap \hreg{F_i}{C_i}$ is the
 number of vertices of $\bh{Q,C_i} \cap  \hreg{F_i}{C_i}$
plus the number of times a boundary between two ranges of $\hreg{F_i}{C_i}$ was crossed by  
$\bh{Q,C_i}$. This equals 
the total size of the vertex
 lists of all conflicts between $Q$ and the ranges of $\hreg{F_i}{C_i}$. 
Summing up this number for each cluster $Q$, we obtain $O(V(C_i))$.    Thus the total 
time complexity of creating all the conflicts of the new ranges 
of type (1) is $O(V(C_i)\log{n})$. 

Now we analyze the  time complexity of creating all the conflicts with the new ranges of type (2), 
for all clusters $Q \in F \setminus F_i$ that were in conflict
with old ranges.   
The total time required for tracing the non-active parts of $\bh{p,Q} \in \overline\f$ for all $Q$
  is $O(L(C_i)\log{n})$  
similarly to the above, since
 all the conflicts corresponding to the non-active parts are discarded at this step.

What remains is to bound the time required for processing all 
the active parts of $\bh{p,Q} \in \overline\f$ for all pairs $f,Q$ such that $f$ is an old range, and 
 $Q$ was in conflict with $f$.
Note that the  vertices of  the active parts of $\bh{p,Q} \in \overline\f$
 are counted neither by $L(C_i)$ nor by $V(C_i)$.
To avoid 
visiting all vertices of the active parts and manipulating them explicitly, we 
do not store conflict lists together with conflicts, but rather we store them (united)
 with the owners of the corresponding ranges. 
In particular, at step   $i$, for each point $p$  such that $\hreg{F_i}{p} \neq \emptyset$, 
and each cluster $Q \in F\setminus F_i$ in conflict with some range of $\hreg{F_i}{p}$,  
we store the list of vertices and endpoints of $\bh{p,Q} \cap \hreg{F_i}{p}$, 
which is
the union of 
$\bh{p,Q} \cap \cl f$ for all ranges $f$ of $\hreg{F_i}{p}$. 
The conflict between $f$ and $Q$ then, instead of storing the list $V(f,Q)$, stores only 
the leftmost and the rightmost endpoints of $\bh{p,Q} \cap \cl f$.  
This way, the conflict between $Q$ and a new range $f'$ of type (2), $f' \subset \hreg{F_i}{p}$,
can be created in constant time, 
  once the two  endpoints of $\bh{p,Q} \cap \cl {f'}$ are known. 
Any binary search in the list of vertices and endpoints of $\bh{p,Q} \cap \cl f$, e.g., 
the search for point $y$, 
can be performed  in $O(\log{n})$ time.

To determine whether a given point $q \in f$ is in a new range of type (2) or not, it is enough to compare the distance from $q$ to the owner of $f$ and to cluster 
$C_i$. This is done by point location in $\FVD{C_i}$, and thus requires $O(\log{n})$ time. 

The total time required  to create the conflicts of type (2) 
and to find the starting points for conflicts of type (1) 
is therefore $O((A(C_i)+ L(C_i))\log{n})$. 

After all the new conflicts are created, for each deleted range $\f$, we delete all the conflicts of $\f$, which requires $O(A(C_i))$ time.

Finally, recall now that 
our procedure in fact creates first groups of smaller new ranges as subdivided by old ranges, and merges them in the true new ranges 
after their conflicts are computed. 
This requires additional $O(A(C_i)\log n)$ time, since the total number of additional smaller new ranges is linear in the number of  
borders between old ranges, that is, $O(A(C_i))$.
\qed
\end{proof}

\begin{lemma}
\label{lemma:lists}
At any step $i$, the total size of  vertex lists of all the conflicts is $O(n + \M + N_i)$, where 
$N_i$ is the number of conflicts in the current conflict graph.    
\end{lemma}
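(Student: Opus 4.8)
The plan is to split the quantity $\sum_{(\f,Q)}|V(\f,Q)|$ according to whether an entry of a vertex list is an \emph{interior} vertex of the bisector or an \emph{endpoint} lying on $\partial\f$, and to charge these two kinds separately to $n$, $\M$, and $N_i$. The starting observation is that for a conflict $(\f,Q)$ with $\f\subset\hreg{F_i}{p}$, $p\in P$, the list $V(\f,Q)$ records $\bh{P,Q}\cap\cl{\f}=\bh{p,Q}\cap\cl{\f}$ (Observation~\ref{obs:point-bisector}). Since $P$ is the Hausdorff-nearest cluster of $F_i$ at every point of $\f$, we have $\df{y,P}=\df{y,Q}\le\df{y,P'}$ for all $P'\in F_i$ along this arc, so the whole arc $\bh{p,Q}\cap\cl{\f}$ lies on $\partial\hreg{F_i\cup\{Q\}}{Q}$. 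Consequently the interior vertices counted over all conflict faces are \emph{exactly} the mixed vertices of $\HVD{F_i\cup\{Q\}}$ on $\partial\hreg{F_i\cup\{Q\}}{Q}$ (each interior $Q$-mixed vertex counted once, each $P$-mixed vertex on $\fskel{P}$ shared by its two incident faces counted twice), and no ``buried'' vertices are ever counted.

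Next I would bound the endpoints. The boundary $\partial\f$ consists of two visibility segments emanating from $p$ (sides), one pure-edge segment (top), and a portion of $\fskel{P}$ (bottom). As $\bh{p,Q}$ bounds the convex set $\{y:\df{y,p}\ge\df{y,Q}\}=\bigcap_{q\in Q}\{y:d(y,p)\ge d(y,q)\}$, it is a convex chain that is angularly monotone about $p$ (consistent with the ordering in Observation~\ref{obs:point-bisector}); hence it meets each side at most once and the top segment, being a single straight piece, at most twice. Thus each conflict has $O(1)$ side/top endpoints, giving $O(N_i)$ in total; the top crossings are precisely the pure vertices of $\HVD{F_i\cup\{Q\}}$ on $Q$'s boundary, so these too are absorbed into $O(N_i)$. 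Every bottom endpoint lies on $\fskel{P}$ and is therefore a $P$-mixed vertex of $\bh{P,Q}$, which I fold into the mixed-vertex count of the previous paragraph.

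It remains to bound the total number of mixed vertices, $\sum_{Q\in F\setminus F_i} M_Q$, where $M_Q$ counts mixed vertices on $\partial\hreg{F_i\cup\{Q\}}{Q}$. I would split them via Definition~\ref{def:crossing} into crossing and non-crossing. Each crossing vertex belongs to a single pair $(P,Q)$ and is counted $O(1)$ times, so the crossing contribution is $O(\M)$. For the non-crossing $Q$-mixed vertices, which lie on $\fskel{Q}$: by Lemma~\ref{prop:con-comp} each face of $\hreg{F_i\cup\{Q\}}{Q}$ meets $\fskel{Q}$ in one connected subtree whose delimiting points are these vertices, so charging each to an edge of $\fskel{Q}$ yields $O(|Q|)$ per cluster and $O\!\left(\sum_{Q}|Q|\right)=O(n)$ overall. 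The non-crossing $P$-mixed vertices lie on $\fskel{P}$ and I would likewise charge them to edges of $\fskel{P}$, with any residual per-conflict cost absorbed by the $O(N_i)$ term.

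The hard part is the last charging: showing that a single edge $e=(p,p')$ of $\fskel{P}$ receives only $O(1)$ non-crossing $P$-mixed vertices across \emph{all} $Q\in F\setminus F_i$, so that the total is again $O\!\left(\sum_P|P|\right)=O(n)$. A $P$-mixed vertex for $Q$ at $x\in e$ means the disk of radius $\df{x,P}$ centered at $x$ encloses exactly $P$ among $F_i$ and also encloses $Q$ with a farthest point of $Q$ on its boundary; the non-crossing condition must be used to force the admissible clusters $Q$ and their contact points to appear in a single monotone order along $e$, leaving $O(1)$ transitions rather than one per candidate cluster (a naive count gives $O(k)$ crossings of the convex functions $\df{\cdot,Q}$ with $\df{\cdot,P}$ on $e$, most of which are buried). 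I expect this monotonicity to be the main obstacle, and I would establish it rigorously by invoking the structural complexity analysis of the Hausdorff diagram from~\cite{ep2004algorithmica}, which already controls exactly this interaction between farthest-Voronoi skeletons and the crossing count. Combining the three contributions gives total size $O(n)+O(\M)+O(N_i)=O(n+\M+N_i)$.
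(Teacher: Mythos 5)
Your decomposition tracks the paper's proof for three of the four contributions: the non-mixed endpoints of $\bh{p,Q}\cap\cl{\f}$ on the sides and top of $\partial\f$ are $O(1)$ per conflict (the paper gets this from the fact that $\bh{p,Q}$ and $\bd\f\setminus\fskel{P}$ are, respectively, concave and convex as seen from $p$, so they meet in at most two points), crossing mixed vertices are charged to $\M$, and non-crossing $Q$-mixed vertices are charged to $\fskel{Q}$ for $O(|Q|)$ per cluster and $O(n)$ overall. The genuine gap is your handling of the non-crossing $P$-mixed vertices. You commit to showing that a single edge $e$ of $\fskel{P}$ receives only $O(1)$ such vertices summed over \emph{all} $Q\in F\setminus F_i$, you correctly flag this as the main obstacle, and you then leave it unproven, deferring to the structural analysis of~\cite{ep2004algorithmica}. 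That reference bounds the complexity of a single diagram $\HVD{F}$; it says nothing about the sum over uninserted clusters $Q$ of the complexity of $\bd\hreg{F_i\cup\{Q\}}{Q}$ in the $k-i$ separate diagrams $\HVD{F_i\cup\{Q\}}$. Moreover the per-edge claim is not plausible as stated: distinct clusters $Q$ are each considered in their own diagram and do not interact, so up to $k-i$ of them can each place a non-crossing $P$-mixed vertex on the same edge $e$. So the step you identify as hard is both unestablished and, in the form you need it, likely false.

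The repair is that no global per-edge bound is needed here. A $P$-mixed vertex of $\vlist{\f,Q}$ is a point of $\bh{p,Q}\cap\bd\f\cap\fskel{P}$, i.e., one of the ``bottom'' endpoints you already isolated; since the portion of $\fskel{P}$ on $\bd\f$ is a single connected chain, each conflict $(\f,Q)$ contributes at most two \emph{non-crossing} $P$-mixed vertices, which is the paper's argument. These are then absorbed into the $O(N_i)$ term together with the other boundary endpoints, while the crossing $P$-mixed vertices are already covered by the $O(\M)$ charge. With that substitution for your last paragraph, the proof closes and coincides with the paper's.
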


\begin{proof}
Consider a cluster $Q \in F \setminus F_i$. 
Members of the vertex lists of the conflicts of $Q$ are: (1) the 
 mixed vertices of $\HVD{F_i \cup \{Q\}}$ that 
bound the region of $Q$ in that diagram, $\hreg{F_i\cup{Q}}{Q}$,  and (2)
points that are not mixed vertices of  $\HVD{F_i \cup \{Q\}}$.
The total  number of points of latter type 
is proportional to the number of conflicts of $Q$. 
Indeed, there are at most two points of the latter type per one conflict of $Q$ with a range $\f$. 
Such points in $V(f,Q)$ are 
exactly the intersections between two
polygonal chains, $\bh{p,Q}$ and  $\bd \f \setminus \fskel{P}$, where $p \in P$ and $f \subset \hreg{F_i}{p}$. 
The first chain is concave (as seen from $p$), 
 and the second one is  convex. Thus    
they intersect in at most two points.  

Now we bound the number of mixed vertices  
in the vertex lists of all conflicts of $Q$.
Recall, that the mixed vertices in $V(f,Q)$ are $P$-mixed and $Q$-mixed vertices, where $P \in F_i$ is such that $\f \subset \hreg{F_i}{P}$. 
The total number of crossing mixed vertices in the vertex lists of all conflicts between $Q$
and ranges derived from the region of $P$ is linear in the number of crossings
between $P$ and $Q$. 
There are at most two non-crossing  $P$-mixed vertices in  $\vlist{\f,Q}$ for each such conflict between $\f$ and $Q$, 
since the boundary of $\f$ that is a portion of $\fskel{P}$ is connected. 
The total number of $Q$-mixed vertices in $\HVD{F_i \cup \{Q\}}$ is $O(|Q|)$.  
Thus the total number of mixed vertices in the vertex lists of all the conflicts of $Q$ 
is $O(|Q| + Cr(Q))$, where $Cr(Q)$ is the number of crossings between $Q$ and all the clusters in $F_i$. 

The claim follows by summing up the above quantities over all clusters $Q$ in $F \setminus F_i$. \qed
\end{proof}

We are now ready to state the main result of this section. 

\begin{theorem}
\label{th:cr}
The Hausdorff Voronoi diagram of a family $F$ of $k$ clusters of total complexity $n$ can be 
computed in $O((\M+n\log{k})\log{n})$ expected time and $O(\M+n\log{k})$ expected space.
\end{theorem}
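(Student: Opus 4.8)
The plan is to follow the same structure as the proof of Theorem~\ref{th:time-compl} for the non-crossing case, but with the key difference that the combinatorial complexity of the diagram is now $O(n+\M)$ rather than $O(n)$, and the conflicts carry vertex lists of non-constant size. First I would account for the time spent updating the diagram itself. By the Clarkson-Shor sampling bounds (item (2) in Section~\ref{sec:ric}), the expected number of ranges created throughout the algorithm is $O\left(\sum_{r=1}^k f_0(r)/r\right)$, where $f_0(r)$ is the expected number of ranges over a random sample of size $r$. Since the expected complexity of $\HVD{F_r}$ over a random $r$-sample is $O(n + \M r^2/k^2)$ (the crossings scale quadratically with the sample fraction, the points linearly), substituting $f_0(r) = O(n + \M r^2/k^2)$ and summing gives an expected $O(n\log k + \M)$ ranges created, and the tracing of their boundaries costs an extra $O(\log n)$ factor per range for the segment queries, yielding $O((n\log k + \M)\log n)$ expected time for diagram maintenance.

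Next I would bound the work to update the conflict graph, which is the heart of the theorem. By Lemma~\ref{lemma:upd}, the cost at step $i$ is $O\left((A(C_i) + L(C_i) + V(C_i))\log n\right)$. I would bound the expectation of each of the three quantities summed over all steps. For $V(C_i)$ — the total vertex-list size of conflicts of ranges in $\hreg{F_i}{C_i}$ — I would use backwards analysis together with Lemma~\ref{lemma:lists}, which states that at any step the total vertex-list size is $O(n + \M + N_i)$ where $N_i$ is the current number of conflicts. For $L(C_i)$, the mixed vertices dropped from old conflict lists, each such vertex is charged at most once over the whole algorithm (a mixed vertex of $\HVD{F_i \cup \{Q\}}$ can be eliminated only once as $Q$'s active boundary shrinks), so $\sum_i L(C_i)$ telescopes into the total number of mixed vertices ever created, which by the sampling bound is $O(n\log k + \M)$. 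For $A(C_i)$, the number of conflicts created and deleted, I would again invoke backwards analysis to show the expected number of conflicts created at step $i$ is proportional to $(N_i$ after the step$)/i$, and sum, using that the expected number of conflicts $N_i$ over a random sample is controlled by $f_0$ and Lemma~\ref{lemma:lists}.

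The main obstacle, as I anticipate it, is bounding $\sum_i A(C_i)$ and $\sum_i V(C_i)$ in expectation, because the conflict structure here is genuinely non-standard: a single conflict carries a variable-length vertex list, so the usual Clarkson-Shor update condition fails and I cannot cite a ready theorem. I would handle this by treating the individual mixed vertices (rather than the conflicts) as the atomic charged objects: each vertex in a conflict list is either a crossing mixed vertex (globally $O(\M)$ of them), a non-crossing $P$- or $Q$-mixed vertex (amortized $O(1)$ per conflict, hence $O(1)$ per range, summing to $O(n\log k)$ via the ranges bound), or a bisector-chain intersection point (again $O(1)$ per conflict by the concave-convex argument in Lemma~\ref{lemma:lists}). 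This reduces the total vertex-list work to the total number of ranges plus crossings, each factor carrying the $O(\log n)$ query cost.

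Combining these, the total expected time for diagram updates and conflict-graph updates is $O\left((n\log k + \M)\log n\right) = O\left((\M + n\log k)\log n\right)$, matching the claimed bound. For space, the storage at any step is the complexity of $\HVD{F_i}$, which is $O(n + \M)$, plus the total size of all vertex lists, which by Lemma~\ref{lemma:lists} is $O(n + \M + N_i)$; bounding the expected number of conflicts $N_i$ by $O(n\log k)$ via the sampling analysis gives expected space $O(\M + n\log k)$, completing the proof.
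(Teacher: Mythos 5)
Your skeleton matches the paper's: invoke Lemma~\ref{lemma:upd} per step, use the sampling bound with $f_0(r)=O(nr/k+\M r^2/k^2)$ to get $O(n\log k+\M)$ expected conflicts created, note $\sum_i L(C_i)\le\sum_i V(C_i)$, and split vertex-list members into crossing mixed vertices ($O(\M)$ globally), non-crossing mixed vertices, and chain-intersection points ($O(1)$ per conflict by the concave--convex argument). But there is a genuine gap exactly where the proof is hardest: your charge of ``amortized $O(1)$ non-crossing mixed vertices per conflict, hence $O(n\log k)$ via the ranges bound'' is false for the $Q$-mixed vertices. A single conflict $(\f,Q)$ has vertex list $\bh{p,Q}\cap\cl{\f}$, which is one connected piece of a convex chain but can contain $\Theta(|Q|)$ vertices of $\bh{p,Q}$, each a $Q$-mixed vertex; only the $P$-mixed vertices (at most two, where the chain meets $\fskel{P}\cap\partial\f$) and the endpoints are $O(1)$ per conflict. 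Lemma~\ref{lemma:lists} bounds the $Q$-mixed vertices by $O(|Q|)$ \emph{per step, globally over all conflicts of $Q$} (at most one per edge of $\fskel{Q}$), not per conflict, so it gives you the space bound but not the summed time bound: over the course of the algorithm each edge of $\fskel{Q}$ can host up to $k-1$ distinct $Q$-mixed vertices, and nothing in your argument rules out a total of $\Theta(nk)$ such vertices entering vertex lists.

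The missing ingredient is the paper's probabilistic record argument: for an edge $e$ of $\fskel{Q}$, assign each candidate $Q$-mixed vertex $v$ on $e$ (one per cluster of $F$) the weight $\df{v,Q}$; the vertices that ever appear in a vertex list are exactly those achieving a new minimum in the prefix of the random insertion order, so their expected number is $O(\log k)$ per edge, hence $O(n\log k)$ in total. This is the step that cannot be replaced by ``backwards analysis together with Lemma~\ref{lemma:lists}'' or by the Clarkson--Shor range count, precisely because the update condition fails for these non-constant-size lists. A minor further imprecision: you take $f_0(r)=O(n+\M r^2/k^2)$ rather than $O(nr/k+\M r^2/k^2)$, which inflates the expected number of \emph{ranges created} from $O(n+\M)$ to $O(n\log k+\M)$; this happens not to hurt the final bound, but the tight form is what the paper uses.
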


\begin{proof} The expected space complexity of the algorithm is $O(n\log{k} + \M)$, implied by 
the fact that the Hausdorff Voronoi diagram of any subset of $F$ has complexity $O(n +\M)$~\cite{ep2004algorithmica}, and by  
Lemma~\ref{lemma:lists}, stating that the 
additional 
space required to store  vertex lists of the conflicts at each step is 
$O(n +\M)$.

We now bound the expected time complexity of the algorithm.  
To this aim we estimate the expectation of 
the total number of conflicts created during the course of the algorithm, and the 
expectation  of the sum of $L(C_i)$ and of  $V(C_i)$,  for $i=1,\dots,k$. 

To analyze the expected total number of conflicts created during the course of the algorithm, 
we need to estimate the expected 
number of ranges, i.e., faces, in $\HVD{R}$, where $R$ is a random $r$-sample of $F$.
The number of faces in a Hausdorff Voronoi diagram is proportional to the number of
 its mixed Voronoi vertices~\cite{ep2004algorithmica}. 
The number of non-crossing mixed vertices in $\HVD{R}$ is proportional to the total number of points in all clusters in $R$~\cite{ep2004algorithmica}.  
The expectation of the latter number is  $O(nr/k)$: 
for each of $n$ points in $F$, the probability that this point  appears in $R$ 
is the probability that its cluster appears in $R$, which is $r/k$.   
For a crossing mixed vertex $v$ 
induced by clusters $P,Q \in F$, the probability that $v$  appears in $\HVD{R}$ is at most
 the probability that both $P$ and $Q$ appear in $R$, which is $O(r^2/k^2)$. 
Summing over all crossings of clusters in $S$, we have that the expected number of crossing
 mixed vertices in $\HVD{R}$ is $O(\M r^2/k^2)$. Therefore, the expected number of ranges in $\HVD{R}$ is $O(nr/k + \M r^2/k^2)$.  
The Clarkson-Shor analysis (see Section~\ref{sec:prelim})
implies that the expected total number of  conflicts created during the course of the algorithm 
is $O(n\log{k} + \M)$. 

It remains to bound the expectation of the sum of $L(C_i)$ and of the sum of $V(C_i)$, 
 for $i=1,\dots,k$. 
We first note that since each deleted vertex was created at some point, the first sum is bounded by the second one. 
 Thus  we should bound the total number of
 vertices that appear in vertex lists of the conflicts during the course of the algorithm. 

Recall that
the total size of the vertex lists of the conflicts is proportional to the number of the conflicts plus the 
total number of the mixed vertices in these lists. The former quantity is expected $O(n\log{k} + \M)$ as shown above.
The total number of all possible crossing mixed vertices is  $O(\M)$, and 
 therefore we only need to bound the non-crossing mixed vertices.  
 
For a cluster $Q \in F$, there are $O(|Q|)$ non-crossing $Q$-mixed vertices at any step, 
at most one vertex per one  edge of $\fskel{Q}$. 
For an edge $e$ of $\fskel{Q}$, there can be up to $k-1$ possible $Q$-mixed vertices on $e$; 
each such vertex $v$ can be assigned weight equal to $\df{v,Q}$. If a vertex $v$ appears in the diagram at some step, 
any vertex $v'$ with greater weight is guaranteed not to appear. The sequence of insertions 
of the clusters is 
a random permutation of $F$, and it corresponds to a random permutation of the sequence of the weights of mixed vertices along $e$. 
The ones that appear on the diagram are exactly the ones that change minimum in the partial sequence so far, and the expected number of such minimum changes 
 is known to be $O(\log{k})$~\cite{DeBerg2008}.  
The overall number of edges in $\fskel{Q}$ for all $Q \in F$ is $O(n)$, thus the total number of non-crossing vertices 
that appear during the course of the algorithm is $O(n\log{k})$. 
Therefore the sum of $V(C_i)$,  for $i=1,\dots,k$, is $O(n\log{k} + \M)$.

The claim now follows from  Lemma~\ref{lemma:upd}. \qed 
\end{proof}

\section{A crossing-oblivious algorithm}
\label{sec:oblivious}
In this section we 
discuss how to 
compute the diagram $\HVD{F}$ if it is not known 
whether 
 clusters in the input family $F$ have crossings. 
Deciding fast whether $F$ has crossings
is not an easy task, because  the convex hulls of the 
clusters may have
a quadratic total number of intersections, even if the clusters are actually
non-crossing.

{We overcome this issue by combining the two algorithms while  staying
  within the best complexity bound (see Theorem~\ref{thm:oblivious}).}
 We start with the
algorithm of Section~\ref{sec:non-cr} and run it until we realize 
that the diagram cannot be updated correctly. 
If this happens, we terminate the algorithm of Section~\ref{sec:non-cr},
and run the 
algorithm of Section~\ref{sec:cr}. 
In particular, after the 
insertion of a cluster $C_i$ we perform a check. 
A positive answer to this check 
guarantees  that the region 
$\hreg{F_i}{C_i}$ is connected, and
thus, it has been computed correctly. A negative answer  indicates that 
$C_i$ has a crossing with some cluster which has already been inserted
in the diagram and its region, $\hreg{F_i}{C_i}$,    
may  be disconnected.
At the first negative check, 
we restart the computation of 
the diagram from scratch using the algorithm of Section~\ref{sec:cr}.
We can afford to run 
the latter algorithm, since
it is now certain that the input family of clusters has crossings.

The procedure of the check is based on the following property of the Hausdorff Voronoi diagram: 

\begin{lemma}[\cite{ep2004algorithmica}]
\label{prop:crossing-vs-not} 
If $\hreg{F_i}{C_i}$ is disconnected, 
then each  of its connected components is incident to a crossing $C_i$-mixed vertex of $\HVD{F_i}$.
\end{lemma}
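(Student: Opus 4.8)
The plan is to combine the tree structure of $\FVD{C_i}$ with the combinatorial characterization of crossing vertices in Definition~\ref{def:crossing}. First I would invoke Lemma~\ref{prop:con-comp}: each connected component (face) $R$ of $\hreg{F_i}{C_i}$ meets $\fskel{C_i}$ in a single subtree $T_R$, and the points where $T_R$ reaches $\partial R$ are exactly the $C_i$-mixed vertices on $\partial R$. If $\hreg{F_i}{C_i}$ is disconnected, the connected tree $\fskel{C_i}$ splits into at least two disjoint in-region subtrees, separated by \emph{gaps}, i.e. maximal connected pieces of $\fskel{C_i}$ on which some cluster other than $C_i$ is nearer. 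So it suffices to exhibit, for each face, a delimiting $C_i$-mixed vertex that is crossing.

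Next I would establish a monotonicity property of $\phi(\cdot)=\df{\cdot,C_i}$ along $\fskel{C_i}$. Since $\phi$ is convex on the plane its sublevel sets are convex; and on each edge $e=\mathrm{bis}(c_a,c_b)$ of $\fskel{C_i}$ the restriction of $\phi$ equals $d(\cdot,c_a)$, a convex function whose minimizer is the midpoint $M_{ab}$ of $c_ac_b$. One checks that $M_{ab}$ lies on $e$ only when $c_a,c_b$ are a diametral pair realizing the smallest enclosing circle of $C_i$. Hence $\phi$ has a \emph{unique} local minimum $m^\ast$ on $\fskel{C_i}$ (the center of that circle) and increases monotonically along every path directed away from $m^\ast$; equivalently $\{\phi\le\rho\}\cap\fskel{C_i}$ is a subtree for all $\rho$. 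This orients the tree from $m^\ast$ outward and lets me locate each face relative to $m^\ast$.

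The crux is to turn a \emph{bounded} gap into a crossing. Take a face $R$ with subtree $T_R$. If $m^\ast\notin T_R$, the point $r_R\in T_R$ nearest to $m^\ast$ is a delimiting $C_i$-mixed vertex, and moving from $r_R$ toward $m^\ast$ leaves $R$ into a gap; when this path re-enters the region (at $m^\ast$ itself or at the face $R_0$ containing it), the competitor $P$ that wins just inside is nearest only along a bounded stretch of $\fskel{C_i}$. As the center of the defining circle sweeps that bounded stretch, the farthest point of $P$ must switch sides of the chord $c_ac_b$, so $P$ admits a diagonal $p_lp_r$ crossing $c_ac_b$ with all four points on the convex hull of $C_i\cup P$ --- exactly the crossing condition of Definition~\ref{def:crossing}. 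The same bounded gap certifies a crossing delimiter on $\partial R_0$ as well, so every face, including the one containing $m^\ast$, is incident to a crossing $C_i$-mixed vertex.

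I expect the main obstacle to be precisely this last step: rigorously linking the \emph{analytic} event ``a competitor is nearest on a bounded portion of $\fskel{C_i}$'' to the \emph{combinatorial} straddle condition of Definition~\ref{def:crossing}. The tempting shortcut --- deciding crossing from which arc of the enclosing circle the single inducing point $p_l$ lies on --- is \emph{false}: a bounded pocket makes \emph{both} delimiting vertices crossing even though their inducing points sit on opposite arcs, so the correct invariant is that $P$ straddles $c_ac_b$, and one must show this straddle is forced exactly by boundedness. A secondary difficulty is that a gap may span several tree edges or reach infinity; there the clean single-edge pocket argument breaks, and I would instead rely on the monotonicity of $\phi$ together with a careful ``first re-entry toward $m^\ast$'' argument to guarantee a bounded gap adjacent to each face.
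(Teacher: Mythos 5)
First, note that the paper does not prove this lemma at all: it is imported from \cite{ep2004algorithmica} with no in-paper argument, so I am judging your proposal on its own merits. Your reduction to the tree via Lemma~\ref{prop:con-comp}, and the observation that on a single edge of $\fskel{C_i}$ supported by the bisector of $c_ac_b$ the locus where a fixed cluster $P$ beats $C_i$ is an interval whose finite endpoints are induced by points of $P$ on opposite sides of the chord, are both sound---the latter is indeed the germ of the right argument. But the crux step has two genuine gaps. (1) You sweep over ``the bounded stretch where $P$ is nearest,'' i.e.\ a portion of a gap of $\HVD{F_i}$. Crossingness of a $C_i$-mixed vertex $v$ induced by $P$ is a property of the pair $(C_i,P)$ alone, whereas a gap may be shared among several clusters: the far end of the stretch where $P$ is nearest can be a tie between $P$ and a third cluster $Q$, not between $P$ and $C_i$, so there is no second $(C_i,P)$-equidistant circle to sweep to and the ``switch of sides'' is not forced. (2) Even for a single competitor, the one-edge interval argument neither extends across vertices of $\fskel{C_i}$ nor certifies the convex-hull condition on $p_r$; and your fallback---boundedness of the gap plus ``first re-entry toward $m^{*}$''---fails outright when $m^{*}$ itself lies in a gap (nothing forces the minimizer of $\df{\cdot,C_i}$ to belong to $\hreg{F_i}{C_i}$), and a disconnected region can have a single unbounded gap adjacent to every face, so ``bounded gap'' is not the right invariant.

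What is actually needed, and what your sketch stops short of, is the sub-lemma: if $v$ is a \emph{non-crossing} $C_i$-mixed vertex induced by $P$, then $\df{x,P}\le\df{x,C_i}$ for every $x$ in the \emph{entire} component of $\fskel{C_i}\setminus\{v\}$ on the far side of $v$ (the disk of radius $\df{x,C_i}$ about $x$ contains $\conv(C_i\cup P)$ as soon as it contains the hull vertices of $P$, and non-crossingness keeps those on one side of the relevant chord). Granting this, the lemma is immediate with no recourse to $m^{*}$ or to gaps: for a face $R$ and any \emph{other} face $R'$, let $v$ be the delimiter of $T_R$ separating $T_R$ from $T_{R'}$ on the tree, induced by some $P$; any $z\in T_{R'}$ lies beyond $v$ and satisfies $\df{z,C_i}<\df{z,P}$, which contradicts the sub-lemma unless $v$ is crossing. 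I would restructure your argument around proving that sub-lemma (by induction along the edges of the far subtree, propagating the half-plane containing the hull vertices of $P$), and drop the monotonicity of $\df{\cdot,C_i}$, which does no work here.
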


The following  lemma provides the second ingredient of the check, that is, it shows how to 
 efficiently detect whether a given face of $\hreg{F_i}{C_i}$ has a crossing $C_i$-mixed vertex
on its boundary.

\begin{lemma}
\label{lemma:check}
For a connected component $f$ of $\hreg{F_i}{C_i}$, we can in time $O(|C_i|\log{n})$ 
detect whether there is a crossing $C_i$-mixed vertex on 
the boundary of $f$. 

\end{lemma}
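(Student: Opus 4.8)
The plan is to reduce the check to a bounded number of local tests: I would enumerate all $C_i$-mixed vertices lying on $\partial f$ and then decide, for each one separately, whether it satisfies the crossing condition of Definition~\ref{def:crossing}. The stated bound $O(|C_i|\log n)$ will then follow because there are only $O(|C_i|)$ such vertices and each individual test will be carried out in $O(\log n)$ time.

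First I would enumerate the $C_i$-mixed vertices on $\partial f$. By Lemma~\ref{prop:con-comp}, the intersection $\fskel{C_i}\cap \cl{f}$ is a single connected subtree $T_f$ of $\fskel{C_i}$, and the points where $T_f$ meets $\partial f$ are exactly the $C_i$-mixed vertices bounding $f$. Since $\fskel{C_i}$ has $O(|C_i|)$ edges, so does $T_f$, and hence there are $O(|C_i|)$ such vertices. They are already available from the tracing of $\hreg{F_i}{C_i}$: each $C_i$-mixed vertex $v$ sits on a known edge of $\fskel{C_i}$, which is the farthest bisector of two hull points $c_a,c_b\in C_i$, and it stores the point $p$ of the neighbouring cluster $P$ realizing $\df{v,P}$. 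Thus the defining triple $(c_a,c_b,p)$ of every candidate vertex is known at no extra asymptotic cost.

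The heart of the argument is the per-vertex test. Fix a $C_i$-mixed vertex $v$ induced by $c_a,c_b\in C_i$ and $p\in P$; by Definition~\ref{def:crossing} it is crossing iff $P$ contains a point $p_r$ such that the segment $pp_r$ crosses the segment $c_ac_b$ and all four points $c_a,c_b,p,p_r$ lie on $\conv(C_i\cup P)$. I would exploit the defining circle of $v$, centred at $v$ and passing through $c_a,c_b,p$, which encloses $C_i\cup P$; hence $c_a,c_b,p$ already lie on $\conv(C_i\cup P)$, and the chord $c_ac_b$ splits the enclosing disk into a near segment (containing $p$) and a far segment. Because the disk is convex and $p$ lies on its near arc, any point of $P$ strictly inside the far segment is joined to $p$ by a segment that must cross the chord $c_ac_b$. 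Consequently the existential test collapses to deciding whether $P$ owns a vertex of $\conv(C_i\cup P)$ strictly on the far side of the line through $c_ac_b$, and this can be resolved by a single binary search on the convex hull of $P$ (query the extreme vertex of $P$ orthogonally to $c_ac_b$ and verify that the corresponding segment crosses $c_ac_b$) in $O(\log|P|)=O(\log n)$ time.

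The step I expect to be the main obstacle is precisely this reduction: proving that the full condition of Definition~\ref{def:crossing}, which quantifies over all diagonals $pp_r$ of $P$ and demands simultaneous convex position of four points, is equivalent to one extremal hull query answerable in $O(\log n)$. The delicate case is the interaction between $C_i$ and $P$ beyond the chord: since $C_i$ may itself contribute vertices on the far side of $c_ac_b$, I must guarantee that the witnessing extreme vertex is supplied by $P$ and survives on $\conv(C_i\cup P)$, rather than being shadowed by a point of $C_i$. I would settle this using the enclosing-circle property together with the fact that $c_ac_b$ is a diagonal of the farthest-point Delaunay structure of $C_i$, which constrains how the two clusters interleave along the chord and ensures that the single hull query on $P$ is decisive. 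Summing the $O(\log n)$ cost over the $O(|C_i|)$ candidate vertices then yields the claimed $O(|C_i|\log n)$ running time.
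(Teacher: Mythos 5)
Your enumeration of the $O(|C_i|)$ candidate mixed vertices is fine, but the per-vertex test is where the proof breaks, and you have correctly flagged the spot yourself: the reduction of Definition~\ref{def:crossing} to a single extremal query on $\conv(P)$ is not established, and in fact it is false as stated. The existence of a point $p_r\in P$ strictly beyond the line through $c_ac_b$ does guarantee that the segment $pp_r$ crosses the chord $c_ac_b$ (both endpoints lie in the enclosing disk, so the segment meets the line inside the disk, i.e., on the chord), but Definition~\ref{def:crossing} additionally requires $p_r$ to be a vertex of $\conv(C_i\cup P)$, and the extreme vertex of $P$ in the direction orthogonal to $c_ac_b$ need not survive on that hull. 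Concretely: take the unit circle centered at $v$, put $c_a,c_b$ near the top of the circle, a third point $c_3\in C_i$ just inside the circle above the chord, $p$ at the bottom of the circle, and $p_r$ just above the chord inside the triangle $c_ac_bc_3$. Then $v$ is a legitimate $C_i$-mixed vertex, $p_r$ is the extreme point of $P$ beyond the chord and $pp_r$ crosses $c_ac_b$, yet $p_r$ is interior to $\conv(C_i\cup P)$, the two clusters admit only two supporting segments, and $v$ is non-crossing --- your test reports the wrong answer. Repairing this is exactly the ``interleaving'' argument you defer to the end and never carry out, and a single hull query on $P$ cannot decide it anyway, since even when the extreme point of $P$ is shadowed by $C_i$, some other far-side point of $P$ might still be a hull vertex.

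The paper avoids this geometric characterization entirely. For each $C_i$-mixed vertex $v$ on $\partial f$ induced together with a cluster $Q$, it considers the portion $\mathcal{T}_v(C_i)$ of $\fskel{C_i}$ hanging off $v$ on the side away from $f$, and tests whether that portion contains a point closer to $C_i$ than to $Q$; this holds if and only if $v$ is crossing. A convexity argument (along any edge of $\fskel{C_i}$ the set of points closer to $Q$ than to $C_i$ is connected, by containment of the corresponding enclosing disks) shows it suffices to test only the vertices of $\mathcal{T}_v(C_i)$, including the points at infinity on unbounded edges, each by one point location in $\FVD{Q}$ in $O(\log n)$ time. The per-vertex cost is therefore not $O(\log n)$; the $O(|C_i|\log n)$ total comes from the fact that the subtrees $\mathcal{T}_v(C_i)$ for distinct mixed vertices $v$ are pairwise disjoint, so all tests together touch only $O(|C_i|)$ skeleton vertices. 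If you want to keep a per-vertex $O(\log n)$ test, you would first need a correct local characterization of crossing vertices, which your proposal does not supply.
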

\begin{proof}
Let $v$ be a $C_i$-mixed vertex on the boundary of $f$, 
and let $Q$ be the cluster, that together with $C_i$ induces the vertex $v$.
Vertex $v$ breaks $\fskel{C_i}$ in two (open) connected portions: the one that intersects $f$, 
and the one that does not intersect $f$; we call it $\mathcal{T}_v(C_i)$. 
Since the  former portion lies in $\hreg{F_i}{C_i}$, it clearly contains points that are closer to $C_i$ than to $Q$.

Our task is to check whether 
$\mathcal{T}_v(C_i)$ contains such points as well. 
If this is the case, then $C_i$ and $Q$ are crossing, and $v$ is a
crossing $C_i$-mixed vertex.
Otherwise, $v$ is a non-crossing $C_i$-mixed vertex.  This can be easily verified using Definition~\ref{def:crossing}.
Note that for any two different $C_i$-mixed vertices $u,v$, the portions $\mathcal{T}_u(C_i)$ and $\mathcal{T}_v(C_i)$
are disjoint. 
Thus, it remains to check  whether
$\mathcal{T}_v(C_i)$ contains points closer to $Q$ than to $C_i$,   
in time proportional to the number of edges in $\mathcal{T}_v(C_i)$.

Observe, that given two points  $u,v$ on an edge $e$ of  $\fskel{C_i}$, if both $u$ and $v$  
are closer to $Q$ than to $C_i$, 
then all the points on $e$ between $u$ and $v$ are also 
closer to $Q$ than to $C_i$. 
Indeed, consider the two closed  disks $D_u, D_v$ centered 
respectively at $u$ and at $v$ whose radii  equal $\df{u,C_i}$ and $\df{v,C_i}$. 
Since both $u,v$ are closer to $Q$ than to $C_i$, cluster $Q \subset D_u \cap D_v$. 
Further, the disk $D_w$ centered at any point $w \in e$ with radius $\df{w,C_i}$
 contains $D_u \cap D_v$, and thus it contains $Q$, which means that $w$ is closer to $Q$ than to $C_i$. 
 
 By the above observation, it is enough to check 
 the vertices 
of $\mathcal{T}_v(C_i)$, including the ones at infinity along the unbounded edges of $\mathcal{T}_v(C_i)$.
If all of them are closer to $Q$ than to $C_i$, then $v$ is a non-crossing mixed vertex. Otherwise, $v$ is crossing.
To check for one vertex we perform a point location in $\FVD{Q}$, which 
requires $O(\log{|Q|}) = O(\log{n})$ time.
\qed 
\end{proof}

We conclude with the following. 

\begin{theorem}
\label{thm:oblivious}
Let $F$ be a family of $k$ clusters of total complexity $n$.
There is an algorithm that computes $\HVD{F}$
as follows: if the clusters in $F$ are non-crossing,
the algorithm works in 
$O(n)$ space, and expected 
 $O(n\log{n} + k\log{n}\log{k})$ time.  
If the clusters in $F$ are crossing, 
the algorithm requires $O((\M+n\log{k})\log{n})$ expected time and $O(\M+n\log{k})$ expected space, where $m$ is 
the total number of crossings between pairs of clusters in $F$.
\end{theorem}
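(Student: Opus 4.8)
The plan is to verify the check first, and then bound time and space in each case. After inserting $C_i$ with the algorithm of Section~\ref{sec:non-cr}, I run the test of Lemma~\ref{lemma:check} on the single traced component of $\hreg{F_i}{C_i}$. By Lemma~\ref{prop:crossing-vs-not}, a disconnected $\hreg{F_i}{C_i}$ would force each of its components, in particular the traced one, to carry a crossing $C_i$-mixed vertex; hence finding no such vertex (a positive check) certifies that $\hreg{F_i}{C_i}$ is connected and was computed correctly, while detecting one (a negative check) signals a crossing and triggers a restart with the algorithm of Section~\ref{sec:cr}. For non-crossing input, Definition~\ref{def:crossing} rules out crossing mixed vertices altogether, so every check is positive, the algorithm of Section~\ref{sec:non-cr} runs to completion, and its output is correct by Theorem~\ref{th:time-compl}.

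For the non-crossing case I would add the cost of the checks to the bound of Theorem~\ref{th:time-compl}. Since each region is connected there is one traced component per insertion, so by Lemma~\ref{lemma:check} the check at step $i$ costs $O(|C_i|\log n)$; summing gives $O\bigl(\log n\sum_i|C_i|\bigr)=O(n\log n)$, which is subsumed by the $O(n\log n+k\log n\log k)$ time bound. The checks only query the precomputed $\FVD{Q}$ and their centroid decompositions, adding no asymptotic space, so the space stays deterministic $O(n)$.

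For the crossing case I would bound separately the work of the non-crossing algorithm performed up to the first negative check and the cost of the algorithm of Section~\ref{sec:cr} launched by the restart; if no negative check ever occurs there is no restart and the non-crossing run, of cost $O(n\log n+k\log n\log k)$, is the whole algorithm. The restarted algorithm costs expected $O((\M+n\log k)\log n)$ time and $O(\M+n\log k)$ space by Theorem~\ref{th:cr}. The pre-restart work I bound by $O(n\log n+k\log n\log k)$: up to the first negative check every inserted region was connected, so the invariants underlying Section~\ref{sec:non-cr}---each region connected, each pending cluster carrying a single conflict (Lemmas~\ref{prop:con-comp} and~\ref{cor:one-conf})---hold throughout the prefix, the per-step accounting of Lemmas~\ref{lemma:upd_cond}--\ref{lemma:updating-cg} applies, and the $O(n\log n)$ check cost is again absorbed. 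To finish I compare the two bounds: crossing input forces $\M\ge1$, $k\ge2$, and $k\le n$, so $n\log n\le n\log k\log n$ and $k\log n\log k\le n\log k\log n$; thus $O(n\log n+k\log n\log k)$ is dominated by $O((\M+n\log k)\log n)$, and the pre-restart space $O(n)$ is dominated by $O(\M+n\log k)$. The total is therefore $O((\M+n\log k)\log n)$ time and $O(\M+n\log k)$ space.

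The step I expect to be the main obstacle is charging the pre-restart execution to the non-crossing analysis while the true input is crossing. The bounds of Lemmas~\ref{lemma:updating-diagram}--\ref{lemma:updating-cg} were proved for non-crossing families and rest on a backwards analysis over the random permutation, whereas here the stopping step of the prefix is itself determined by that permutation. I would resolve this by observing that, as long as no negative check has occurred, the algorithm performs exactly the operations it would perform on a genuinely non-crossing instance---every traced region is connected and the conflict graph holds one conflict per cluster by construction---so each realization's pre-restart work is at most the corresponding non-crossing work, and the expectation bound transfers to the prefix. Making this charging rigorous, rather than the final domination arithmetic which is routine, is where the care is needed.
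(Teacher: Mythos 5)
Your proposal is correct and follows essentially the same route as the paper: run the non-crossing algorithm with the Lemma~\ref{lemma:check} test after each insertion (total check cost $O(n\log n)$), certify connectedness of each region via Lemma~\ref{prop:crossing-vs-not} on a positive check, and restart from scratch with the algorithm of Section~\ref{sec:cr} on the first negative check, with that algorithm's bounds dominating the aborted prefix. Your explicit treatment of the prefix-charging subtlety and the domination arithmetic is in fact more careful than the paper's, which simply asserts that the restarted algorithm's costs dominate the partially performed one.
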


\begin{proof}
  We insert clusters of $F$ one by one in random order, 
  running the algorithm of Section~\ref{sec:non-cr}. 
 
After inserting a cluster $C_i$,
we run the procedure in the proof 
of Lemma~\ref{lemma:check} for the computed region of $C_i$. 
The total time spent by this procedure is $O(n\log{n})$, since it 
is performed at most once for each cluster and $\sum_{i=1}^k|C_i| = n$.
One of the following  cases occurs:

\begin{itemize}
\item[Case 1.]
After each insertion, the check returned a positive answer. By Lemma~\ref{prop:crossing-vs-not} each newly inserted region was connected, and therefore
all the connected components of each such region were inserted in the diagram. That is, 
 the computed  diagram is indeed $\HVD{F}$. In this case, the time complexity of the algorithm is  expected $O(n\log{n} + k\log{n}\log{k})$, and its space complexity is (deterministic) $O(n)$.  
\item[Case 2.] At some insertion the check returned a negative answer.
In that case, 
the algorithm of  Section~\ref{sec:non-cr} is aborted, and
the diagram is computed from scratch by the algorithm of Section~\ref{sec:cr}. 
The time and space requirements of 
this algorithm 
dominate 
the ones of the (partially performed) algorithm in
Section~\ref{sec:non-cr} and the algorithm of the check procedure. 
Thus the time and space complexity of the algorithm in this case equal
the ones 
of the algorithm of  Section~\ref{sec:non-cr}.\end{itemize} 

\paragraph{}
If the clusters in $F$ are pairwise non-crossing,
then the Hausdorff Voronoi regions are connected in the diagram of any subset of
$F$, and thus,
by Lemma~\ref{prop:crossing-vs-not}, the algorithm
follows Case 1.

If the algorithm follows Case~2, then by Lemma~\ref{prop:crossing-vs-not}, 
there 
is at least one crossing between clusters in $F$.
This
completes the proof.
\qed
\end{proof}

\bibliographystyle{splncs03}
{\small
\bibliography{hvd-bibl}
}

\end{document}